\newtheorem{definition}{Definition}
\newtheorem{lemma}{Lemma}
\newtheorem{theorem}{Theorem}
\newtheorem{example}{Example}
\begin{document}
%
\title{Efficient Information Flow Maximization in Probabilistic Graphs}
%
%
%
%

\author{Christian Frey,
        Andreas~Z\"{u}fle,
        Tobias Emrich
        and Matthias Renz\vspace{-0.3cm}
\IEEEcompsocitemizethanks{\IEEEcompsocthanksitem C.Frey and T.Emrich are with
the Department of Database Systems and Data Mining,
Ludwig-Maximilians-Universit\"{a}t, Munich, Germany.\protect\\
E-mail: \{frey, emrich\}@dbs.ifi.lmu.de
\IEEEcompsocthanksitem A.Z\"{u}fle is with the Department of Geography and Geoinformation
Science, George Mason University, VA, USA
E-mail: azufle@gmu.edu
\IEEEcompsocthanksitem M.Renz is with the Department of Computational and Data
Sciences, George Mason University, VA, USA
E-mail: mrenz@gmu.edu
}}
%
%

\markboth{Journal of IEEE Transactions on Knowledge and Data Engineering (TKDE)}%
{Shell \MakeLowercase{\textit{et al.}}: Bare Demo of IEEEtran.cls for Computer Society Journals}
%



\IEEEtitleabstractindextext{%
\begin{abstract}
Reliable propagation of information through large networks, e.g., communication
networks, social networks or sensor networks is very important in many
applications concerning marketing, social networks, and wireless sensor
networks.
However, social ties of friendship may be obsolete, and communication links may fail, inducing the notion of uncertainty in such networks.
In this paper, we address the problem of optimizing information propagation in
uncertain networks given a constrained budget of edges. We show that this
problem requires to solve two NP-hard subproblems: the computation of expected
information flow, and the optimal choice of edges. To compute the expected
information flow to a source vertex, we propose the \emph{F-tree} as a
specialized data structure, that identifies independent components of the
graph for which the information flow can either be computed analytically and efficiently, or for which traditional Monte-Carlo sampling can be applied independently of the remaining network.
For the problem of finding the optimal edges, we propose a series of heuristics
that exploit properties of this data structure. Our evaluation shows
that these heuristics lead to high quality solutions, thus yielding high
information flow, while maintaining low running time. \vspace{-0.25cm} 

\end{abstract} 

\begin{IEEEkeywords}
uncertain graphs, network analysis, social network, optimization, information
flow\vspace{-0.1cm}
\end{IEEEkeywords}}

\maketitle
\def\vertexSet{\ensuremath{V_G}}
\def\edgeSet{\ensuremath{E_G}}
\def\vertexSink{\ensuremath{v_s}}
\def\vertex{\ensuremath{v_i}}
\def\edge{\ensuremath{e_{v_i, v_j}}}
\def\predecessor{\ensuremath{pred(\vertex)}}
\def\successor{\ensuremath{succ(\vertex)}}
\def\probabilityFunc{\ensuremath{p(\edge)}}
\def\weightFunc{\ensuremath{w(v_i)}}
\def\neighborSet{\ensuremath{N(\vertex)}}
\def\graphDef{\ensuremath{G(V,E,p,w)}}
\def\path{\ensuremath{path_{G}(v_i, v_j)}}
\def\informationFlowFunc{\ensuremath{if(v_i)}}

\newcommand{\argmax}{\operatornamewithlimits{argmax}}

\IEEEdisplaynontitleabstractindextext

\ifCLASSOPTIONpeerreview
\begin{center} \bfseries To refer to or cite this work, please use the citation of the published version: \\
\vspace{.5cm}
\hyperlink{http://ieeexplore.ieee.org/document/8166795/}{http://ieeexplore.ieee.org/document/8166795/}\\ \vspace{.5cm}
C. Frey, A. Z\"ufle, T. Emrich and M. Renz, "Efficient Information Flow Maximization in Probabilistic Graphs," in IEEE Transactions on Knowledge \& Data Engineering, vol. 30, no. 5, pp. 880-894, 2018.
doi:10.1109/TKDE.2017.2780123
\end{center}
\fi
%
\IEEEpeerreviewmaketitle

\IEEEraisesectionheading{\section{Introduction}\label{sec:intro}}
\vspace{-0.25cm}
\IEEEPARstart{N}{owadays}, social and communication networks have become ubiquitous in our daily life to receive and share information. Whenever we are navigating the World Wide Web, updating our social network profiles, or sending a text message on our cell-phone, we participate in an information network as a node. In such settings, network nodes exchange some sort of information: In social networks, users share their opinions and ideas, aiming to convince others. In wireless sensor networks, nodes collect data and aim to ensure that this data is propagated through the network: Either to a destination, such as a server node, or simply to as many other nodes as possible. Abstractly speaking, in all of these networks, nodes aim at propagating their information, or their belief, throughout the network. The event of a successful propagation of information between nodes is subject to inherent uncertainty. In a wireless sensor, telecommunication or electrical network, a link can be unreliable and may fail with certain probability \cite{routing2007,rubino1998network}. In a social network, trust and influence issues may impact the likelihood of social interactions or the likelihood of convincing another of an individual's idea \cite{guha2004propagation,kempe2003maximizing,adar2007managing}. For example, consider professional social networks like \emph{LinkedIn}. Such networks allow users to endorse each others' skills and abilities. Here, the probability of an edge may reflect the likelihood that one user is willing to endorse another user. The probabilistic graph model is commonly used to address such scenarios in a unified way (e.g. \cite{li2012mining,papapetrou2011efficient,kneighbors2010,zou2010finding,yuan2013efficient,kollios2013clustering, zou2010mining}). In this model, each edge is associated with an existential probability to quantify the likelihood that this edge exists in the graph.
Traditionally, to maximize the likelihood of a successful communication between two nodes, information is propagated by flooding it through the network. Thus, every node that receives a bit of information will proceed to share this information with all its neighbors. Clearly, such a flooding approach is not applicable for large communication and social networks, as the communication between two network nodes incurs a cost: Sensor network nodes have limited computing capability, memory resources and power supply, but require battery power to send and receive messages, and are also limited by their bandwidth; individuals of a social network require time and sometimes even additional monetary resources to convince others of their ideas.
For instance, a professional networking service may provide, for a fee, a service to directly ask a limited number of users to endorse another user $Q$. The challenge is to maximize the expected number of endorsements that $Q$ will receive, while limiting the budget of users asked by the service provider. The first candidates to ask are $Q$'s direct connections. In addition, if a user $u$ has already endorsed $Q$, then $u$'s connections can be asked if they trust $u$'s judgment and want to make the same endorsement.

\begin{figure}[t!]\vspace{-0.8cm}
    \subfigure[original graph]{
        \includegraphics[width =
        0.45\columnwidth]{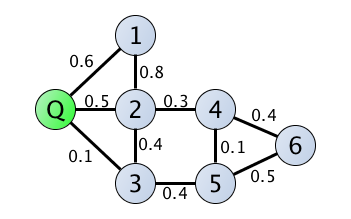}
		\label{fig:runningExample_a}
   } \subfigure[Dijkstra MST]{
       \includegraphics[width =
       0.45\columnwidth]{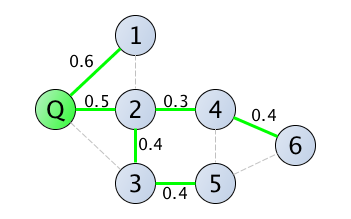}
       \label{fig:runningExample_b}
   }
\vspace{-0.2cm}\\
	\subfigure[Optimal five-edge flow]{
       \includegraphics[width =
       0.45\columnwidth]{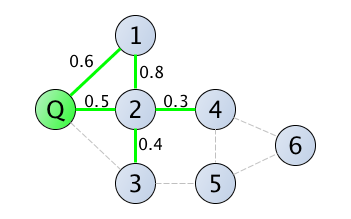}
		\label{fig:runningExample_c}
   } \subfigure[possible world $g_1$]{
       \includegraphics[width =
       0.45\columnwidth]{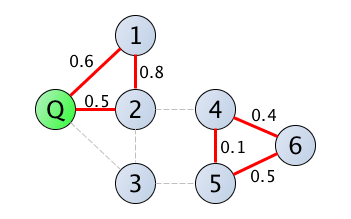}
       \label{fig:runningExample_d}
   } 
	\vspace{-0.2cm}
    \caption{Running example.}\vspace{-0.5cm}
	\label{fig:runningExample}
\end{figure}
In this work, we address the following problem: Given a probabilistic network graph $\mathcal{G}$ with edges that can be activated, i.e., enabled to transfer information, or stay inactive. The problem is to send/receive information from a single node $Q$ in $\mathcal{G}$ to/from as many nodes in $\mathcal{G}$ as possible assuming a limited budget of edges that can be activated. 
To solve this problem, the main focus is on the selection of edges to be activated.
\begin{example}\label{ex:runex}To illustrate our problem setting, consider the network depicted in Figure \ref{fig:runningExample_a}. The task is to maximize the information flow to node $Q$ from other nodes given a limited budget of edges. This example assumes equal weights of all nodes. Each edge of the network is labeled with the probability of a successful communication. A naive solution is to activate all edges. Assuming each node to have one unit of information, the expected information flow of this solution can be shown to be $\simeq 2.51$. While maximizing the information flow, this solution incurs the maximum possible communication cost. A traditional trade-off between these single-objective solutions is using a probability maximizing \emph{Dijkstra's} MST, as depicted in Figure \ref{fig:runningExample_b}. The expected information flow in this setting can be shown to aggregate to $1.59$ units, while requiring six edges to be activated.
Yet, it can be shown that the solution depicted in Figure \ref{fig:runningExample_c} dominates this solution: Only fives edges are used, thus further reducing the communication cost, while achieving a higher expected information flow of $\simeq 2.02$ units of information to $Q$.
\end{example}

The aim of this work is to efficiently find a near-optimal sub-network, which maximizes the expected flow of information at a constrained budget of edges. In Example \ref{ex:runex}, we computed the information flow for an example graph. But in fact, this computation has been shown to be exponentially hard in the number of edges of the graph, and thus impractical to be solved analytically. Furthermore, the optimal selection of edges to maximize the information flow is shown to be \emph{NP}-hard. These two subproblems define the main computational challenges addressed in this work.

To tackle these challenges, the remainder of this work is organized as follows. After a survey of related work in Section \ref{sec:related}, we recapitulate common definitions for stochastic networks and formally define our problem setting in Section \ref{sec:fundamentals}. After a more detailed technical overview in Section \ref{sec:roadmap}, the theoretical heart of this work is presented in Section \ref{sec:reachability}. We show how to identify independent subgraphs, for which the information flow can be computed independently. This allows to divide the main problem into much smaller subproblems. To conquer these subproblems, we identify cases for which the expected information flow can be computed analytically, and we propose to employ Monte-Carlo sampling to approximate the information flow of the remaining cases. Section \ref{sec:ct} is the algorithmic core of our work, showing how aforementioned independent components can be organized hierarchically in a \emph{F-tree} which is inspired by the \emph{block-cut tree} \cite{tarjan1972depth,hopcroft1973algorithm,westbrook1992maintaining}. This structure allows us to aggregate results of individual components efficiently, and we show how previous Monte-Carlo sampling results can be re-used as more edges are selected and activated. Our experimental evaluation in Section \ref{sec:evaluation} shows that our algorithms significantly outperform traditional solutions, in terms of combined communication cost and information flow, on synthetic and real stochastic networks. In summary, the main contributions of this work are:
\begin{compactitem}
\item Theoretical complexity study of the flow maximization problem in probabilistic graphs.
\item Efficient estimation of the expected information flow based on network graph decomposition and Monte-Carlo sampling.
\item Our \emph{F-tree} structure enabling efficient
organization of independent graph components and (local) intermediate results for efficient expected flow computation.
\item An algorithm for iterative selection of edges to be activated to maximize the expected information flow.
\item Thorough experimental evaluation of proposed algorithms.
\end{compactitem}

%
%
%
%

\section{Related Work}
\label{sec:related}

Reliability and Influence computation in probabilistic graphs (a.k.a. uncertain graphs) has recently attracted much attention in the data mining and database research communities.
We summarize state-of-the-art publications and relate our work in this context.

{\bf Subgraph Reliability.} A related and fundamental problem in uncertain graph mining is the so-called subgraph reliability problem, which asks to estimate the probability that two given (sets of) nodes are reachable. This problem, well studied in the context of communication networks, has seen a recent revival in the database community due to the need for scalable solutions for big networks. Specific problem formulations in this class ask to measure the probability that two specific nodes are connected (two-terminal reliability \cite{aggarwal1975reliability}), all nodes in the network are pairwise connected (all-terminal reliability \cite{sharafat2009all}), or all nodes in a given subset are pairwise connected (k-terminal reliability \cite{mrsp2007,hardy2007k}). Extending these reliability queries, where source and sink node(s) are specified, the corresponding graph mining problem is to find, for a given probabilistic graph, the set of most reliable k-terminal subgraphs \cite{fastDiscOfRelkTerSub2010}. All these problem definitions have in common that the set of nodes to be reached is predefined, and that there is no degree of freedom in the number of activated edges - thus all nodes are assumed to attempt to communicate to all their neighbors, which we argue can be overly expensive in many applications.

{\bf Reliability Bounds.} Several lower bounds on (two-terminal) reliability have been defined in the context of communication networks \cite{brecht1988lower, bulka1994network, galtier2005algorithms, provan1984computing}. Such bounds could be used in the place of our sampling approach, to estimate the information gain obtained by adding a network edge to the current active set. However, for all these bounds, the computational complexity to obtain these bounds is at least quadratic in the number of network nodes, making these bounds unfeasible for large networks. Very simple but efficient bounds have been presented in \cite{khan2014fast}, such as using the most-probable path between two nodes as a lower bound of their two-terminal reliability. However, the number of possible (non-circular) paths is exponentially large in the number of edges of a graph, such that in practice, even the most probable path will have a negligible probability, thus yielding a useless upper bound. Thus, since none of these probability bounds are sufficiently effective and efficient for practical use, we directly decided to use a sampling approach for parts of the graph where no exact inference is possible.

{\bf Influential Nodes.} Existing work motivated by applications in marketing provide methods to detect \textit{influential} members within a social network. This can help to promote a new product. The task is to detect nodes, i.e., persons, where the chance that the product is recommended to a broad range of connected people is maximized. In \cite{valueCustomer2001}, \cite{viralMarketing2002} a framework is provided which considers the interactions between the persons in a probabilistic model. As the problem of finding the most influential vertices is \textit{NP-hard}, approximation algorithms are used in \cite{kempe2003maximizing} outperforming basic heuristics based on degree centrality and distance centrality which are applied traditionally in social networks. This branch of research has in common that the task is to activate a constrained number of \emph{nodes} to maximize the information flow, whereas our problem definition constrains the number of activated \emph{edges} for a single specified query/sink node.

{\bf Reliable Paths.} In mobile ad hoc networks, the uncertainty of an edge can be interpreted as the connectivity between two nodes. Thus, an important problem in this field is to maximize the probability that two nodes are connected for a constrained budget of edges \cite{routing2007}. In this work, the main difference to our work is that the information flow to a single destination is maximized, rather than the information flow in general. The heuristics \cite{routing2007} cannot be applied directly to our problem, since clearly, maximizing the flow to one node may detriment the flow to another node.

{\bf Bi-connected components.} The \emph{F-tree} that we propose in this work is inspired by the \emph{block-cut tree} \cite{tarjan1972depth,hopcroft1973algorithm,westbrook1992maintaining}. The main difference is that our approach aims at finding cyclic subgraphs, where nodes are bi-connected. For subgraphs having a size of at least three vertices, this problem is equivalent to finding bi-connected subgraphs, which is solved in \cite{tarjan1972depth,hopcroft1973algorithm,westbrook1992maintaining}. Thus, our proposed data structure treats bi-connected subgraphs of size less than three separately, grouping them together as mono-connected components. More importantly, this existing work does not show how to compute, estimate and propagate probabilistic information through the structure, which is the main contribution of this work.
\section{Problem Definition}
\label{sec:fundamentals}
A probabilistic undirected graph is given by $\mathcal{G}=(V,E,W,P)$, where $V$ is a set of vertices, $E\subseteq V\times V$ is a set of edges, $W:V\mapsto \mathbb{R}^+$ is a function that maps each vertex to a positive value representing the information weight of the corresponding vertex and $P:E\mapsto (0,1]$ is a function that maps each edge to its corresponding probability of existing in $\mathcal{G}$. In the following, it is assumed that the existence of different edges are independent from one another. Let us note, that our approach also applies to other models such as the conditional probability model \cite{kneighbors2010}, as long as a computational method for an unbiased drawing of samples of the probabilistic graph is available.

In a probabilistic graph $\mathcal{G}$, the existence of each edge is a random variable. Thus, the topology of $\mathcal{G}$ is a random variable, too. The sample space of this random variable is the set of all \emph{possible graphs}. A \emph{possible graph} $g = (V_g,E_g)$ of a probabilistic graph $\mathcal{G}$ is a deterministic graph which is a possible outcome of the random variables representing the edges of $\mathcal{G}$. The graph $g$ contains a subset of edges of $\mathcal{G}$, i.e. $E_g\subseteq E$. The total number of possible graphs is $2^{|E_{<1}|}$, where $|E_{<1}|$ represents the number of edges $e\in E$ having $P(e)<1$, because for each such edge, we have two cases as to whether or not that edge is present in the graph. We let $\mathcal{W}$ denote the set of all possible graphs. The probability of sampling the graph $g$ from the random variables representing the probabilistic graph $\mathcal{G}$ is given by the following sampling or realization probability $Pr(g)$:
\begin{equation}\label{eq:pw}
Pr(g)=\prod_{e\in E_g}P(e)\cdot \prod_{e\in E\setminus E_g} (1-P(e)).
\end{equation}
Figure \ref{fig:runningExample_a} shows an example of a
probabilistic graph $\mathcal{G}$ and one of its possible realization $g_1$ in
\ref{fig:runningExample_d}.
This probabilistic graph has $2^{10}=1024$ possible worlds. Using Equation
\ref{eq:pw}, the probability of world $g_1$ is given by:
\begin{align*}
Pr(g_1) = &0.6 \cdot 0.5 \cdot 0.8 \cdot 0.4 \cdot 0.4 \cdot 0.5 \cdot
(1-0.1)\cdot \\
& \cdot (1-0.3) \cdot (1-0.4) \cdot (1-0.1) = 0.00653184
\end{align*}
\begin{definition}[Path]\label{def:path}
Let $\mathcal{G}=(V,E,W,P)$ be a probabilistic graph and let $v_0, v_n\in V$ be two nodes such that $v_0 \neq v_n$. An (acyclic) \textbf{path} $path(v_0,v_n) = (v_0, v_1, v_2, \ldots, v_n)$ is a sequence of vertices, such that $(\forall v_i \in path(v_0,v_n)) (v_i\in V)$ and $(\forall v_i \in path(v_0,v_{n-1})) ((v_i, v_{i+1}) \in E)$.
\end{definition}


\begin{definition}[Reachability]\label{def:reach}
The network reachability problem as defined in \cite{jin2011distance,colbourn1987combinatorics} computes the likelihood of the binomial random variable ${\updownarrow}(i,j,\mathcal{G})$ of two nodes $i,j\in V$ being connected in $\mathcal{G}$, formally:
$$
P({\updownarrow}(i,j,\mathcal{G})):=\sum_{g\in \mathcal{W}}\prod_{e\in E_g}P(e)\cdot \prod_{e\in E\setminus E_g} (1-P(e))\cdot {\updownarrow}(i,j,g),
$$
\end{definition}
where ${\updownarrow}(i,j,g)$ is an indicator function that returns one if there exists a path between nodes $i$ and $j$ in the (deterministic) possible graph $g$, and zero otherwise. For a given query node $Q$, our aim is to optimize the information gain, which is defined as
the total weight of nodes reachable from $Q$.

\begin{definition}[\small Expected Information Flow]
Let $Q\in V$ be a node and let $\mathcal{G}=(V,E,W,P)$ be a probabilistic graph, then $\mbox{flow}(Q,\mathcal{G})$ denotes the random variable of the sum of vertex weights of all nodes in $V$ reachable from $Q$, formally:
$$
\mbox{flow}(Q,\mathcal{G}):=\sum_{v\in
V}P({\updownarrow}(Q,v,\mathcal{G}))\cdot W(v).
$$
Due to linearity of expectations, and exploiting that $W(v)$ is deterministic, we can compute the expectation $E(\mbox{flow}(Q,\mathcal{G}))$ of this random variable as $E(\mbox{flow}(Q,\mathcal{G}))=$
\begin{equation}\label{eq:infflow}
E(\sum_{v\in V}P({\updownarrow}(Q,v,\mathcal{G}))\cdot W(v))=\sum_{v\in
V}E(P({\updownarrow}(Q,v,\mathcal{G})))\cdot W(v)
\end{equation}
\end{definition}
\noindent Given our definition of Expected Information Flow in Equation~\ref{eq:infflow}, we can now state our formal problem definition of optimizing the expected information flow of a probabilistic graph $\mathcal{G}$ for a constrained budget of edges.
\begin{definition}[Maximum Expected Information Flow]\label{def:maxinfflow} Let $\mathcal{G}=(V,E,W,P)$ be a probabilistic graph, let $Q\in V$ be a query node and let $k$ be a non-negative integer. The Maximum Expected Information Flow
\begin{equation}
\mbox{MaxFlow}(\mathcal{G},Q,k)=
$$\vspace{-0.6cm}
$$
argmax_{G=(V,E'\subseteq E,W,P),|E^\prime|\leq k}E(\mbox{flow}(Q,G)),
\end{equation}
is the subgraph of $\mathcal{G}$ maximizing the information flow towards $Q$
constrained to having at most $k$ edges.
\end{definition}
Computing $\mbox{MaxFlow}(\mathcal{G},Q,k)$ efficiently requires to overcome two NP-hard subproblems. First, the computation of the expected information flow $E(\mbox{flow}(Q,G))$ to vertex $Q$ for a given probabilistic graph $\mathcal{G}$ is NP-hard as shown in \cite{colbourn1987combinatorics}. In addition, the problem of selecting the optimal set of $k$ vertices to maximize the information flow $\mbox{MaxFlow}(\mathcal{G},Q,k)$ is a NP-hard problem in itself, as shown in the following.
\begin{theorem}\label{the:hard}
Even if the Expected Information Flow $\mbox{flow}(Q,\mathcal{G})$ to a vertex $Q$ can be computed in $O(1)$ for any probabilistic graph $\mathcal{G}$, the problem of finding $\mbox{MaxFlow}(\mathcal{G},Q,k)$ is still NP-hard.
\end{theorem}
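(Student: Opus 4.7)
The plan is to give a polynomial-time reduction from the decision version of Maximum $\kappa$-Coverage: given sets $S_1,\ldots,S_m$ over a universe $U=\{u_1,\ldots,u_n\}$ and integers $\kappa$ and $T$, decide whether some subcollection of at most $\kappa$ sets covers at least $T$ elements. This problem is well known to be NP-hard (via Set Cover). Given such an instance, I would build a probabilistic graph $\mathcal{G}=(V,E,W,P)$ with a query node $Q$, one \emph{set-node} $v_i$ for each $S_i$, and one \emph{element-node} $w_j$ for each $u_j$; the weights are $W(Q)=W(v_i)=0$ and $W(w_j)=1$; the edge set contains the $m$ set-edges $(Q,v_i)$ together with an element-edge $(v_i,w_j)$ for every pair $(i,j)$ with $u_j\in S_i$; and every edge has probability $1$. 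The budget in the MaxFlow instance is set to $k:=\kappa+T$, and the claim is that the coverage instance is a YES-instance if and only if some $E'\subseteq E$ with $|E'|\leq k$ achieves $E(\mathrm{flow}(Q,G'))\geq T$.

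For the forward direction, from $\kappa'\leq\kappa$ sets covering $T$ chosen elements $u_{j_1},\ldots,u_{j_T}$, I would for each $\ell$ pick a covering set $v_{i(\ell)}$ and include both $(Q,v_{i(\ell)})$ and $(v_{i(\ell)},w_{j_\ell})$ in $E'$; at most $\kappa$ distinct set-edges and exactly $T$ element-edges are used, so $|E'|\leq\kappa+T$, and since every probability is $1$ the deterministic flow is at least $T$. For the converse, let $E'$ with $|E'|\leq\kappa+T$ achieve flow $\geq T$, and write $a$ and $b$ for the numbers of selected set-edges and element-edges. Every reached $w_j$ requires at least one selected edge incident to it, and the only edges incident to $w_j$ are element-edges; distinct $w_j$'s use distinct element-edges, so the flow is bounded above by $b$. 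Hence $b\geq T$ and consequently $a\leq\kappa$, and the reached elements all lie in $\bigcup_{i:(Q,v_i)\in E'}S_i$, yielding a cover of size at most $\kappa$ containing at least $T$ elements.

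The construction is polynomial in $m+n$, and because every edge probability equals $1$ the flow of any subgraph is simply the integer count of $w_j$'s reachable from $Q$, so the theorem's hypothesis that flow is computable in $O(1)$ is trivially satisfied and the hardness is isolated in the combinatorial choice of edges. The main obstacle is the converse direction: one has to exclude the possibility that an optimiser subverts the intended set-cover interpretation of the budget by shifting edges between the two layers. The single observation that defeats this is that each reached element-node consumes at least one of its own incident element-edges, regardless of how many set-nodes offer it; this is what forces $b\geq T$, hence $a\leq\kappa$, and anchors the whole equivalence.
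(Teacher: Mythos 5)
Your overall strategy differs from the paper's: the paper reduces from \emph{0-1 knapsack}, attaching to $Q$ one chain of $w_i-1$ deterministic edges per item with all the value concentrated in the leaf, so that reaching item $i$'s value costs exactly $w_i$ edges; you reduce from maximum $\kappa$-coverage with a two-layer star. Both isolate the hardness in the combinatorial edge choice by using probability-$1$ edges, so the high-level idea is legitimate. However, your converse direction has a genuine gap. The claim that every reached element lies in $\bigcup_{i:(Q,v_i)\in E'}S_i$ is false: an element-node can be reached by a zig-zag path $Q\to v_{i_1}\to w_{j_1}\to v_{i_2}\to w_{j_2}\to\cdots$ that passes \emph{through} other element-nodes and enters set-nodes $v_{i_2},v_{i_3},\dots$ whose direct edges to $Q$ were never selected. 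Concretely, with $S_1=\{u_1\}$, $S_2=\{u_1,u_2\}$, $\kappa=T=2$, $k=4$, the selection $\{(Q,v_1),(v_1,w_1),(w_1,v_2),(v_2,w_2)\}$ meets the budget and achieves flow $2$, yet your extracted cover $\{S_i:(Q,v_i)\in E'\}=\{S_1\}$ covers only $u_1$. Your counting observation ($b\geq T$, hence $a\leq\kappa$) correctly splits the budget between the two layers, but it does not show that the $a$ selected set-edges identify a valid cover of the reached elements, which is exactly what the converse needs.

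The reduction itself is salvageable, but the repair needs a different counting argument: bound the number of \emph{reached} set-nodes, not the number of selected set-edges. If $r$ set-nodes and $s$ element-nodes lie in the connected component of $Q$ under $E'$, that component has $1+r+s$ vertices and is connected, so it contains at least $r+s$ edges of $E'$; with $s\geq T$ and $|E'|\leq\kappa+T$ this forces $r\leq\kappa$. Since every neighbour of an element-node is a set-node, each reached element belongs to some $S_i$ with $v_i$ reached, so $\{S_i: v_i \text{ reached}\}$ is a cover of size at most $\kappa$ covering at least $T$ elements. With that substitution your proof goes through, and it is in one respect stronger than the paper's: the paper's knapsack gadget builds chains of length $w_i-1$, which is polynomial only for unary-encoded weights, whereas your construction is genuinely polynomial in the instance size and establishes strong NP-hardness.
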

\begin{proof}
In this proof, we will show that a special case of computing
$\mbox{MaxFlow}(\mathcal{G},Q,k)$ is \emph{NP-complete}, thus implying that our general problem is \emph{NP-hard}.
We reduce the \emph{0-1 knapsack problem} to the problem of computing
$\mbox{MaxFlow}(\mathcal{G}, Q, k)$. Thus, assume a 0-1 knapsack problem: Given
a capacity integer $W$ and given a set $\{i_1,...,i_n\}$ of $n$ items each
having an integer weight $w_i$ and an integer value $v_i$. The \emph{0-1 knapsack} problem is to find the optimal vector $x=(x_1,...,x_n)\in \{0,1\}^n$ such that $\sum_{i=1}^n v_i\cdot x_i$, subject to $\sum_{i=1}^n w_i \cdot x_i \leq W$. This problem is known to be \emph{NP-complete} \cite{kellerer2004introduction}. We reduce this problem to the problem of computing $\mbox{MaxFlow}(\mathcal{G}, Q, k)$ as follows. Let $\mathcal{G}=(V,E,G,P)$ be a probabilistic graph such that $Q$ is connected to $n$ nodes $\{V_1,...,V_n\}$ (one node for each item of the knapsack problem). Each node $V_i$ is connected to a chain of $w_i-1$ nodes $\{V_i^1,...,V_i^{w_i-1}\}$. All edges have a probability of one, i.e., $P(v\in V)=1$. The information of a node is set to $w_i$ if it is the (only) leaf node $v_i^{w_i-1}$ of the branch of $\mathcal{G}$ connected to $V_i$ and zero otherwise. Finally, set $k=W$. Then, the solution of the \emph{0-1 knapsack} problem can be derived from the constructed $\mbox{MaxFlow}(\mathcal{G}, Q, k)$ problem by selecting all items $n_i$ such that the corresponding node $v_i^{w_i-1}$ is connected to $Q$. Thus, if we can solve the $\mbox{MaxFlow}(\mathcal{G}, Q, k)$ problem in polynomial time, then we can solve the \emph{0-1 knapsack problem} in polynomial time: A contradiction assuming $P\neq NP$.
\begin{figure}[h]
  \begin{center}
    \includegraphics[width=14pc]{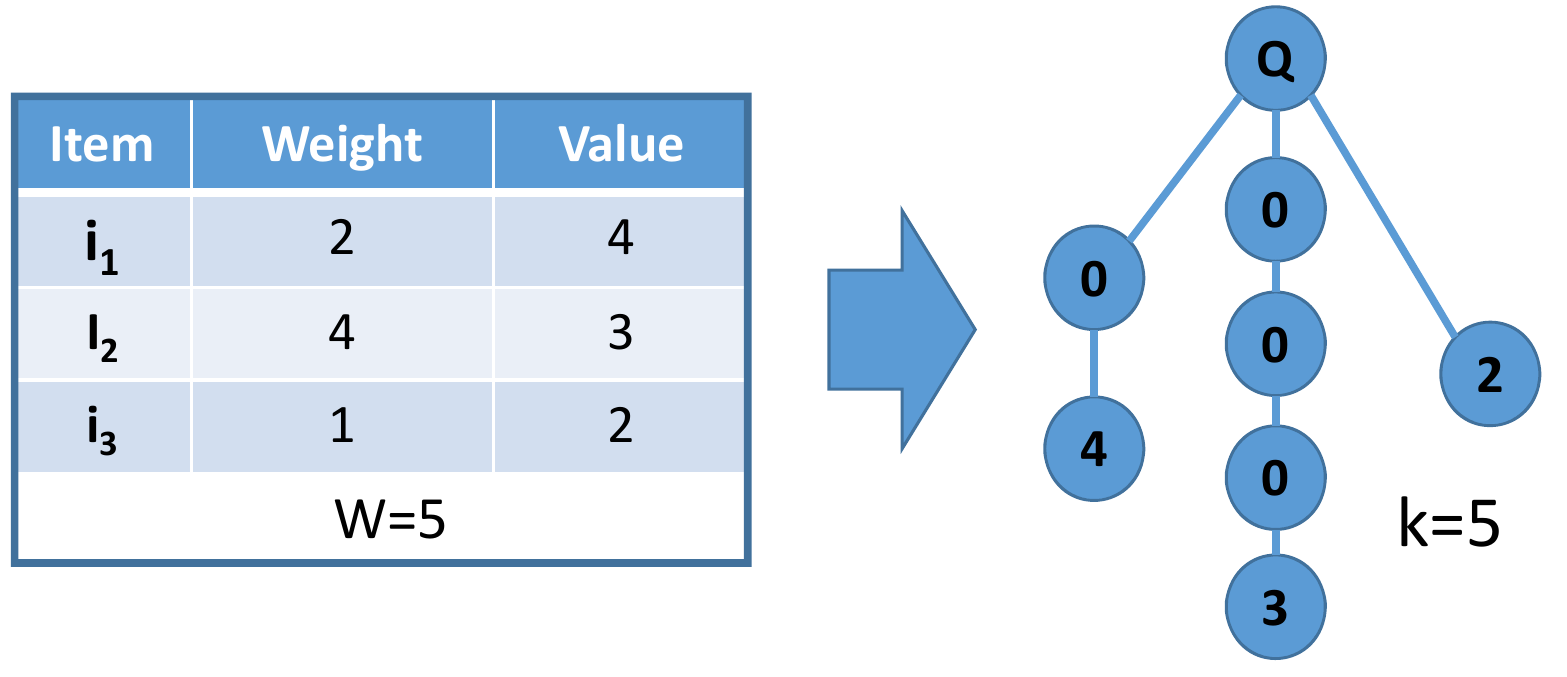}	
  \end{center}
	\caption{\label{fig:knapsack}Example of the Knapsack Reduction
	of Theorem \ref{the:hard}}
	\vspace{-0.4cm}
\end{figure}
\end{proof}

%
%

\section{Roadmap}
\label{sec:roadmap} To compute $\mbox{MaxFlow}(\mathcal{G},Q,k)$, we first need an efficient solution to approximate the reachability probability $E({\updownarrow}(Q,v,\mathcal{G}))$ from $Q$ to/from a single node $v$. Since this problem can be shown to be \emph{\#P-hard}, Section \ref{sec:ct} presents an approximation technique which exploits stochastic independencies between branches of a spanning tree of subgraph $G$ rooted at $Q$.  This technique allows to aggregate independent subgraphs of $G$ efficiently, while exploiting a sampling solution for components of the graph $\mbox{MaxFlow}(\mathcal{G},Q,k)$ that contains cycles.

Once we can efficiently approximate the flow $E({\updownarrow}(Q,v,\mathcal{G}))$ from $Q$ to each node $v\in V$, we next tackle the problem of efficiently finding a subgraph $\mbox{MaxFlow}(\mathcal{G},Q,k)$ that yields a near-optimal expected information flow given a budget of $k$ edges in Section \ref{sec:algorithms}. Due to the theoretic result of Theorem \ref{the:hard}, we propose heuristics to choose $k$ edges from $\mathcal{G}$. Finally, our experiments in Section \ref{sec:evaluation} support our theoretical intuition that our solutions for the two aforementioned subproblems synergize: An optimal subgraph will choose a budget of $k$ edges in a tree-like fashion, to reach large parts of the probabilistic graph. At the same time, our solutions exploit tree-like subgraphs for efficient probability computation.\vspace{-0.25cm}
\section{Expected Flow Estimation}
\label{sec:reachability} 
In this section we estimate the expected information flow of a given subgraph $G
\subseteq \mathcal{G}$. Following Equation \ref{eq:infflow}, the reachability probability $P({\updownarrow}(Q,v,\mathcal{G}))$ between $Q$ and a node $v$ can
be used to compute the total expected information flow
$E(\mbox{flow}(Q,\mathcal{G}))$. This problem of computing the reachability
probability between two nodes has been shown to be \emph{$\#P$-hard}
\cite{routing2007, colbourn1987combinatorics} and sampling solutions
have been proposed to approximate it \cite{Sampling06,emrich2012exploration}. In this section, we will present our
solution to identify subgraphs of $\mathcal{G}$ for which we can compute the
information analytically and efficiently, such that expensive numeric sampling
only has to be applied to small subgraphs.
We first introduce the concept of Monte-Carlo sampling of a subgraph.
\subsection{Traditional Monte-Carlo Sampling}
\begin{lemma}\label{lem:sampling}
Let $\mathcal{G}=(V,E,W,P)$ be an uncertain graph
and let $\mathcal{S}$ be a set of sample worlds drawn randomly and unbiased from
the set $\mathcal{W}$ of possible graphs of $\mathcal{G}$. Then the average information flow in samples in $\mathcal S$
\begin{equation}\label{eq:infflowsampling}
 \frac{1}{|\mathcal{S}|}\sum_{g\in \mathcal{S}}\mbox{flow}(Q,g) = \frac{1}{|\mathcal{S}|}\cdot \sum_{g\in
\mathcal{S}} \sum_{v} {\updownarrow}(Q,v,g)\cdot W(v)
\end{equation}
is an unbiased estimator of the expected information flow
$E(\mbox{flow}(Q,\mathcal{G}))$, where ${\updownarrow}(Q,v,g)$ is an indicator
function that returns one if there exists a path between nodes $Q$ and $v$ in the (deterministic) sample graph $g$, and zero otherwise.
\end{lemma}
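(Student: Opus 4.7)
The plan is to show unbiasedness in two stages: first for a single random sample $g$ drawn from $\mathcal{W}$ according to $Pr(\cdot)$, and then extend to an average over $|\mathcal{S}|$ independent samples using linearity of expectation. The core computation is just carefully unpacking the definition of $E(\mbox{flow}(Q,\mathcal{G}))$ given in Equation \ref{eq:infflow} together with Definition \ref{def:reach}.

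First, I would verify that for any deterministic possible graph $g$, the quantity $\sum_v {\updownarrow}(Q,v,g) \cdot W(v)$ is exactly the realization $\mbox{flow}(Q,g)$ of the random variable $\mbox{flow}(Q,\mathcal{G})$ when the underlying edge random variables take the values prescribed by $g$; this is immediate from Definition \ref{def:reach} and the definition of $\mbox{flow}$. Next, for a single sample $g$ drawn unbiasedly from $\mathcal{W}$ with probability $Pr(g)$ as in Equation \ref{eq:pw}, I would compute
\begin{align*}
E\!\left(\sum_{v\in V} {\updownarrow}(Q,v,g)\cdot W(v)\right)
&= \sum_{v\in V} W(v)\cdot E({\updownarrow}(Q,v,g)) \\
&= \sum_{v\in V} W(v)\cdot \sum_{g\in\mathcal{W}} Pr(g)\cdot {\updownarrow}(Q,v,g),
\end{align*}
where linearity of expectation is applied and $W(v)$ is pulled outside since it is deterministic. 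By Definition \ref{def:reach}, the inner sum equals $P({\updownarrow}(Q,v,\mathcal{G}))$, so the whole expression collapses to $\sum_{v\in V} W(v)\cdot P({\updownarrow}(Q,v,\mathcal{G})) = E(\mbox{flow}(Q,\mathcal{G}))$ by Equation \ref{eq:infflow}.

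Finally, I would extend from one sample to the average over $\mathcal{S}$ by another application of linearity of expectation: since each $g\in\mathcal{S}$ is drawn unbiasedly from $\mathcal{W}$,
\begin{equation*}
E\!\left(\frac{1}{|\mathcal{S}|}\sum_{g\in\mathcal{S}}\sum_{v\in V}{\updownarrow}(Q,v,g)\cdot W(v)\right)
= \frac{1}{|\mathcal{S}|}\sum_{g\in\mathcal{S}} E(\mbox{flow}(Q,\mathcal{G}))
= E(\mbox{flow}(Q,\mathcal{G})),
\end{equation*}
establishing the claim. No independence across samples is even needed for unbiasedness; only identical (unbiased) distribution is used.

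There is no real obstacle here: the lemma is essentially a direct consequence of linearity of expectation and the definition of $P({\updownarrow}(Q,v,\mathcal{G}))$ as an expectation of an indicator. The only subtlety worth flagging is to be explicit that the hypothesis ``drawn randomly and unbiased from $\mathcal{W}$'' means each sample $g$ has distribution $Pr(\cdot)$, which is exactly what makes $E({\updownarrow}(Q,v,g)) = P({\updownarrow}(Q,v,\mathcal{G}))$ hold, and that $W(v)$ being deterministic allows it to be factored out of the expectation.
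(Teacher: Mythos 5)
Your proposal is correct and follows essentially the same route as the paper's proof: linearity of expectation to reduce the sample average to a single-sample expectation, combined with the unbiased-sampling assumption to identify that expectation with $E(\mbox{flow}(Q,\mathcal{G}))$. The only difference is that you explicitly unpack the single-sample step via $E({\updownarrow}(Q,v,g)) = \sum_{g\in\mathcal{W}} Pr(g)\cdot{\updownarrow}(Q,v,g) = P({\updownarrow}(Q,v,\mathcal{G}))$, a detail the paper's proof asserts directly from the unbiasedness assumption, and you correctly observe that only identical distribution, not independence, of the samples is needed.
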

\begin{proof}
For $\mu$ to be an unbiased estimator of $E(\mbox{flow}(Q,\mathcal{G}))$, we have to show that $E(\mu)=E(\mbox{flow}(Q,\mathcal{G}))$.
Substituting $\mu$ yields $E(\mu)=E(\frac{1}{|\mathcal{S}|}\sum_{g\in
\mathcal{S}}\mbox{flow}(Q,g))$. Due to linearity of expectations, this is equal
to $\frac{1}{|\mathcal{S}|}\sum_{g\in \mathcal{S}}E(\mbox{flow}(Q,g))$. The sum
over $|\mathcal{S}|$ identical values can be replaced by a factor of
$|\mathcal{S}|$.
Reducing this factor yields $E(\mbox{flow}(Q,g\in\mathcal{S}))$.
Following the assumption of unbiased sampling $\mathcal{S}$ from the set $\mathcal{W}$ of possible worlds, the expected information flow $E(\mbox{flow}(Q,g))$ of a sample possible world $g\in\mathcal{S}$ is equal to the expected information flow $E(\mbox{flow}(Q,\mathcal{G}))$.
\end{proof}

Naive sampling of the whole graph $\mathcal{G}$ has disadvantages:
First, this approach requires to compute reachability queries on a set of
possibly large sampled graphs. Second, a rather large approximation error is
incurred. We will approach these drawbacks by first describing how
non-cyclic subgraphs, i.e. trees, can be processed in order to compute
the information flow exactly and efficiently without sampling.
For cyclic subgraphs, we show how sampled information flows can be used to compute the information flow in the full graph.

\subsection{Mono-Connected vs. Bi-Connected graphs}
\label{subsec:treeStructures}
The main observation that will be exploited in our work is the following: if there exists only one
possible path between two vertices, then we can compute their reachability
probability efficiently.
\begin{definition}[Mono-Connected Nodes]\label{def:mono}
Let $\mathcal{G}=(V,E,W,P)$ be a probabilistic graph and let $A,B\in V$. If
$\mbox{path}(A,B)=(A=v_0,v_1,...,v_{k-1},v_k=B)$ is \emph{the only} path
between $A$ and $B$, i.e., there exists no other path $p\in V\times V\times
V^*$ that satisfies Definition \ref{def:path}, then we denote $A$ and $B$ as \emph{mono-connected}.
\end{definition}
In the following, when the query vertex $Q$ is clear from the context, we call a vertex $A$ mono-connected if it is mono-connected to the query vertex $Q$.
\begin{lemma}\label{lem:path}
If two vertices $A$ and $B$ are mono-connected in a probabilistic graph $\mathcal{G}$, then the reachability
probability between $A$ and $B$ is equal to the product of the edge
probabilities included in $\mbox{path}(A,B)$, i.e., $$
{\updownarrow}(A,B,\mathcal{G})=\prod_{i=0}^{k-1}P((v_i,v_{i+1})) \text{ with }
v_i \in path(A,B)$$
\end{lemma}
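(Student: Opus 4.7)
The plan is to reduce reachability in the probabilistic graph to the event that the unique path $\mbox{path}(A,B)$ survives in the sampled possible world, and then exploit edge independence to factor this probability.

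First, I would observe the following structural fact: for any possible world $g=(V,E_g)$ with $E_g\subseteq E$, every path between $A$ and $B$ in $g$ is also a path between $A$ and $B$ in $\mathcal{G}$, because $g$ only drops edges and never adds any. By the hypothesis that $A$ and $B$ are mono-connected in $\mathcal{G}$, the unique path in $\mathcal{G}$ is $\mbox{path}(A,B)=(v_0,v_1,\dots,v_k)$. Hence $A$ and $B$ are connected in $g$ if and only if every edge $(v_i,v_{i+1})$ for $i=0,\dots,k-1$ lies in $E_g$. In other words, the indicator ${\updownarrow}(A,B,g)$ equals the indicator of the event $\mathcal{E}_p := \{g\in\mathcal{W} : \{(v_0,v_1),\dots,(v_{k-1},v_k)\}\subseteq E_g\}$.

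Next, I would plug this observation into Definition \ref{def:reach}:
\begin{equation*}
P({\updownarrow}(A,B,\mathcal{G})) \;=\; \sum_{g\in\mathcal{W}} Pr(g)\cdot {\updownarrow}(A,B,g) \;=\; \sum_{g\in \mathcal{E}_p} Pr(g).
\end{equation*}
Using Equation \ref{eq:pw} and independence of edge existence, I would split each factor of $Pr(g)$ according to whether the edge belongs to the path or not. For every $g\in\mathcal{E}_p$ the path edges contribute exactly $\prod_{i=0}^{k-1}P((v_i,v_{i+1}))$, independently of $g$. Summing the contribution of the remaining edges over all $g\in\mathcal{E}_p$ amounts to marginalizing out the independent Bernoulli variables associated with the non-path edges, which yields $1$. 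Therefore the sum collapses to $\prod_{i=0}^{k-1}P((v_i,v_{i+1}))$, as claimed.

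The only subtle step is the first one: arguing that mono-connectedness in $\mathcal{G}$ transfers to mono-connectedness (in fact, uniqueness of the connecting path) in every possible world. This is the main obstacle, but it follows directly from $E_g\subseteq E$ together with Definition \ref{def:mono}: any alternative path in $g$ would be an alternative path in $\mathcal{G}$ as well, contradicting uniqueness. Once this is in place, the rest is a routine factorization using edge independence.
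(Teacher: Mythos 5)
Your proposal is correct and follows essentially the same route as the paper's proof: both reduce the reachability event to the event that all edges of the unique path $\mbox{path}(A,B)$ are present in the sampled world (you argue this directly via $E_g\subseteq E$, the paper via an equivalent contradiction argument), and both then invoke edge independence to obtain the product of edge probabilities. Your explicit marginalization over the non-path edges is a slightly more detailed rendering of the paper's final step, but not a different argument.
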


\begin{proof}
Following possible world semantics as defined in Definition \ref{def:reach},
the reachability probability ${\updownarrow}(A,B,\mathcal{G})$ is the sum of probabilities of all
possible worlds where $B$ is connected to $A$.
We show that $A$ and $B$ are connected in a possible graph $g$ iff
all $k-1$ edges $e_i=(v_i,v_{i+1})$ with $v_i, v_{i+1}\in \mbox{path}(A,B)$ exist. \\
$\Rightarrow$: By contradiction: Let $A$ and $B$ be connected in $g$, and let any edge on $\mbox{path}(A,B)$ be missing. Then there must exist a path $\mbox{path}^{prime}(A,B)\neq \mbox{path}(A,B)$ which contradicts the assumption that $A$ and $B$ are mono-connected.\\
$\Leftarrow$: If all edges on $\mbox{path}(A,B)$ exist, then $B$ is connected to
$A$ following the assumption that $\mbox{path}(A,B)$ is a path from $A$ to $B$.

Due to our assumption of independent edges, the probability that all edges in
$\mbox{path}(A,B)$ exist is given by $\prod_{i=0}^{k-1}P((v_i,v_{i+1}))$.
\end{proof}

\begin{definition}[Mono-Connected Graph]\label{def:mono-graph}
A probabilistic graph $\mathcal{G}=(V,E,W,P)$ is called \textit{mono-connected}, iff all pairs of vertices in $V$ are mono-connected.
\end{definition}

Next, we generalize Lemma \ref{lem:path} to whole subgraphs, such that a specified vertex $Q$ in that subgraph has a unique path to all other vertices in the subgraph. Using Lemma \ref{lem:path}, we constitute the following theorem that will be
exploited in the remainder of this work.
\begin{theorem}\label{the:exp}
Let $\mathcal{G}=(V,E,G,P)$ be a probabilistic graph, let $Q\in V$ be a node. If $\mathcal{G}$ is mono-connected, then $E(\mbox{flow}(Q,\mathcal{G}))$ can be
computed efficiently.
\end{theorem}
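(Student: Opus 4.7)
The plan is to show that a mono-connected graph is necessarily a tree rooted (conceptually) at $Q$, and then to exhibit a single linear-time traversal that computes $E(\mbox{flow}(Q,\mathcal{G}))$ exactly by propagating reachability probabilities outward from $Q$.

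First I would establish the structural claim: if $\mathcal{G}$ is mono-connected in the sense of Definition \ref{def:mono-graph}, then $\mathcal{G}$ contains no cycle. The argument is by contradiction; any simple cycle $v_{i_0}, v_{i_1}, \dots, v_{i_m}=v_{i_0}$ of length at least three immediately yields two distinct paths between $v_{i_0}$ and $v_{i_1}$ (the direct edge versus going around the cycle the other way), contradicting mono-connectedness of that pair. Combined with the implicit connectivity required by Definition \ref{def:mono} (every pair has at least one path), this shows $\mathcal{G}$ is a tree, and in particular $|E|=|V|-1$.

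Next I would invoke Lemma \ref{lem:path}: since every vertex $v \in V$ is mono-connected to $Q$, the reachability probability $P({\updownarrow}(Q,v,\mathcal{G}))$ equals the product of edge probabilities along the unique path from $Q$ to $v$. Plugging this into Equation \ref{eq:infflow} gives
\begin{equation*}
E(\mbox{flow}(Q,\mathcal{G})) \;=\; \sum_{v\in V} W(v) \cdot \prod_{e \in \mbox{path}(Q,v)} P(e).
\end{equation*}
The efficient computation then proceeds by a single breadth-first or depth-first traversal of the tree rooted at $Q$: store at $Q$ the value $r(Q):=1$, and whenever the traversal crosses an edge $e=(u,v)$ from an already-processed vertex $u$ to a new vertex $v$, set $r(v) := r(u)\cdot P(e)$. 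Each $r(v)$ then equals the required path product, and accumulating $W(v)\cdot r(v)$ during the traversal yields the expected flow in $O(|V|+|E|) = O(|V|)$ total time.

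The only mildly delicate point is the structural argument that mono-connectedness forces acyclicity; once that is in place, the probability factorization is immediate from Lemma \ref{lem:path} and the traversal is routine. I would therefore spend the bulk of the proof on the cycle-elimination step and then state the traversal and its complexity in one short paragraph.
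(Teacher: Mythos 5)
Your proof is correct and follows essentially the same route as the paper's: reduce each vertex's reachability probability to the path product of Lemma \ref{lem:path} and sum the weighted contributions by linearity. If anything, your version is the more careful one, since you prove the direction actually needed (mono-connectedness forces acyclicity, hence a unique path from $Q$ to every vertex), whereas the paper's proof argues the converse implication, and you additionally make the claimed efficiency concrete with the $O(|V|)$ traversal.
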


\begin{proof}
$E(\mbox{flow}(Q,\mathcal{G}))$ is the sum of reachability probabilities of all nodes, according to Equation \ref{eq:infflow}.  If $\mathcal{G}$ is connected and non-cyclic, we can guarantee that each node has exactly one path to $Q$, and thus, is mono-connected. Thus, Lemma \ref{lem:path} is applicable to compute the reachability probability between $Q$ and each node $v\in V$. Due to linearity of expectations, i.e., $E(X+Y)=E(X)+E(Y)$ for random variables $X$ and $Y$, we can aggregate individual reachability expectations, yielding $E(\mbox{flow}(Q,\mathcal{G}))$.
\end{proof}
\noindent Analogously to Definition \ref{def:mono}, we define bi-connected nodes.
\begin{definition}[Bi-Connected Nodes]\label{def:bi}
Let $\mathcal{G}=(V,E,W,P)$ be a probabilistic graph and let $A,B\in V$. If
there exists (at least) two paths $\mbox{path}_1(A,B)$ and $\mbox{path}_2(A,B)$,
such that $\mbox{path}_1(A,B)\neq \mbox{path}_2(A,B)$, then we denote $A$ and
$B$ as \emph{bi-connected}.
\end{definition}
\vspace{.1cm}
\begin{definition}[Bi-Connected Graph]\label{def:bi-graph}
A bi-connected graph \cite{tarjan1972depth,hopcroft1973algorithm} is a
connected probabilistic graph $\mathcal{G}=(V,E,W,P)$ such that removal of
any vertex $A\in V$ will still yield a connected probabilistic graph.
\end{definition}
\begin{lemma}\label{lem:bi}
In a bi-connected graph $\mathcal{G}$ of size $|V|\geq 3$, all pairs of vertices are bi-connected following Definition \ref{def:bi}.
\end{lemma}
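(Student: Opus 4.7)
The plan is to prove the contrapositive of Lemma \ref{lem:bi}: I assume there exist two vertices $A, B \in V$ that are \emph{not} bi-connected in the sense of Definition \ref{def:bi}, and derive a contradiction to the assumption that $\mathcal{G}$ is a bi-connected graph (Definition \ref{def:bi-graph}). Since bi-connected graphs are by hypothesis connected, there is at least one acyclic path from $A$ to $B$; call it $P = (A = v_0, v_1, \ldots, v_k = B)$. By the contrapositive assumption, $P$ is the \emph{only} acyclic $A$-$B$ path. The proof then splits on the length $k$ of this unique path.

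First I would handle the case $k \geq 2$, in which $P$ has some internal vertex $v_i$ with $1 \leq i \leq k-1$. By bi-connectedness of $\mathcal{G}$, the graph $\mathcal{G} \setminus \{v_i\}$ is still connected, so it contains an $A$-$B$ path $P'$. But $P'$ necessarily avoids $v_i$ and is therefore distinct from $P$, contradicting the assumed uniqueness of $P$. This case is essentially immediate from the definition of bi-connected graph.

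Next I would handle the case $k = 1$, where $P$ is the single edge $(A,B)$. The plan is to argue that this edge cannot be a bridge. Suppose for contradiction it were: removing $(A,B)$ would split $\mathcal{G}$ into components $C_A \ni A$ and $C_B \ni B$. Because $|V| \geq 3$, at least one of these components contains a second vertex; without loss of generality, $|C_A| \geq 2$. I then claim $A$ is a cut vertex of $\mathcal{G}$: deleting $A$ leaves $C_A \setminus \{A\}$ (nonempty) with no route to $C_B$, since the only edge between the two components passed through $A$. This contradicts bi-connectedness of $\mathcal{G}$. Hence $(A,B)$ is not a bridge, so $\mathcal{G}$ with the edge $(A,B)$ deleted is still connected, yielding an $A$-$B$ walk that avoids $(A,B)$; reducing this walk to an acyclic path gives a second path $P' \neq P$, again contradicting uniqueness.

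The main obstacle I expect is the case $k=1$: the argument must leverage $|V| \geq 3$ at exactly the right place, because the statement genuinely fails for $|V|=2$ (a two-vertex graph joined by one edge vacuously satisfies Definition \ref{def:bi-graph}, since removing either vertex leaves a single connected vertex, yet the two vertices have a unique path between them). The $|V| \geq 3$ hypothesis is precisely what forces at least one of the hypothetical bridge components to contain an extra vertex, which in turn promotes a bridge endpoint to a cut vertex and collapses the contradiction. Beyond this, I would also briefly verify that the two walks produced in each case indeed reduce to two \emph{distinct} acyclic paths in the sense of Definition \ref{def:path}, which is immediate once one observes that in each subcase the second walk either omits a specific vertex or a specific edge present in $P$.
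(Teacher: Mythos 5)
Your proof is correct and follows the same overall skeleton as the paper's: assume two vertices $A,B$ admit only one path and split on whether that path is the bare edge $(A,B)$ or has an internal vertex. The internal-vertex case is identical in both proofs (delete the internal vertex, invoke Definition \ref{def:bi-graph} to get a second path avoiding it). Where you diverge is the single-edge case. The paper picks a third vertex $C$ (this is where its $|V|\geq 3$ hypothesis silently enters), obtains a $B$--$C$ path in $\mathcal{G}\setminus\{A\}$ and an $A$--$C$ path in $\mathcal{G}\setminus\{B\}$, and concatenates them into an alternative $A$--$B$ connection; you instead show that $(A,B)$ cannot be a bridge, by promoting a bridge endpoint to a cut vertex using $|V|\geq 3$, and then delete the edge to extract a second path. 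Both arguments work. Yours makes the role of $|V|\geq 3$ more explicit and correctly flags that the lemma fails for $|V|=2$, at the cost of routing through the bridge/cut-vertex correspondence; the paper's is shorter but leaves two details implicit, namely where $C$ comes from and why the concatenated walk reduces to a simple path distinct from $(A,B)$ (it does, since neither constituent path can use the edge $(A,B)$). Your closing remark about reducing walks to distinct acyclic paths addresses exactly the detail the paper glosses over.
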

\begin{proof}
By contradiction, let $A,B$ be two nodes in $\mathcal{G}$ that are mono-connected. Let $\mbox{path}(A,B)$ be the only path between them.
\\Case 1: $\mbox{path}(A,B)=(A,B)$ contains no other vertices: Since
$\mathcal{G}$ is bi-connected, removal of vertex $A$ yields a graph where $B$ and $C$ are connected by some path $\mbox{path}(C,B)$. At the same time, removal of vertex $B$ yields a graph where $A$ and $C$ are connected by some path $\mbox{path}(A,C)$. Thus, the concatenation of these paths yields an alternative path between $A$ and $B$, contradicting the assumption that $(A,B)$ are mono-connected by path $(A,B)$. \\Case 2: $\mbox{path}(A,B)=(A,C_1,...,C_n,B)$ contains other vertices. Let $C_1$ be such a vertex. Since $\mathcal{G}$ is bi-connected, removal of vertex $C_1$ yields a graph where $A$ and $B$ are still connected, contradicting the assumption that $A$ and $B$ are mono-connected by $\mbox{path}(A,B)$ only.
\end{proof}

%
The information flow within a bi-connected graph can not be computed efficiently using Theorem \ref{the:exp}, as the flow between any two nodes $A$ to $B$ is shared by more than one path. In the next section, we propose techniques to substitute bi-connected subgraphs by super-nodes, for which we can estimate the information flow using Monte-Carlo sampling exploiting Lemma \ref{lem:sampling}. By substituting the bi-connected subgraphs by super-nodes for which we apply sampling and memoize the sampling information for these super-nodes, we yield a mono-connected graph that uses the substituted super-nodes. This approach maximizes the partitions of the graph for which expensive Monte-Carlo estimation can be replaced using Theorem \ref{the:exp}.

The next section will show how to achieve this goal, by employing a \emph{F-tree} of the graph. This data structure borrowed from graph theory partitions the graph into bi-connected components (a.k.a. ``blocks'') generated by bi-connected subgraphs, and identifies vertices of the graph as \emph{articulation vertices} to connect two bi-connected components. We exploit these articulation vertices, by having them represent all the information flow that is estimated to flow to them from their corresponding bi-connected component. 
\vspace{-0.25cm}
\subsection{Flow tree}
\label{sec:ct}
\begin{figure}
  \subfigure[Example Graph]{
      \includegraphics[width=0.6\columnwidth]{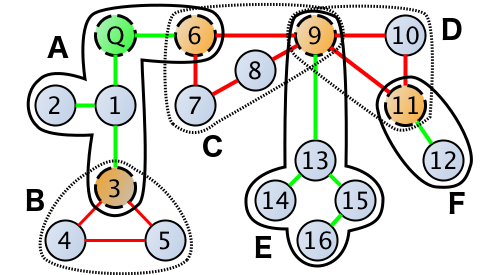}
    \label{fig:ct_example_graph}
    }
  \centering
  \subfigure[F-tree~representation]{
      \includegraphics[width=.3\columnwidth]{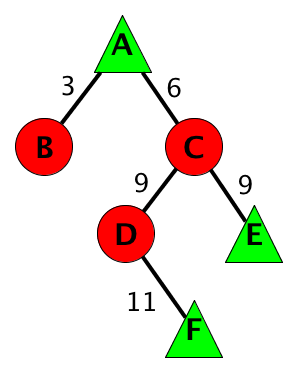}
    \label{fig:ct_example_ct}
    }\vspace{-0.3cm}
	\caption{Running example graph with corresponding \emph{F-tree}}\vspace{-0.5cm}
	\label{fig:ct_example}
\end{figure}
In this section, we propose to adapt the block-cut tree
\cite{tarjan1972depth,hopcroft1973algorithm,westbrook1992maintaining} to
partition a graph into independent bi-connected components. Instead of sampling
the whole uncertain graph, the purpose of this index structure is to exploit
Theorem \ref{the:exp} for mono-connected components, and to apply local
Monte-Carlo within bi-connected components only.
Our employed \emph{Flow tree} (F-tree) memoizes the information flow at each
node.
Before we show how to utilize the \emph{F-tree} for efficient information flow
computation, we first give a formal definition.
\begin{definition}[Flow tree]\label{def:CT}
Let $\mathcal{G}=(V,E,W,P)$ be a probabilistic graph and let $Q\in V$ be a vertex for which the expected information flow is computed.
A \emph{Flow tree} (\emph{F-tree}) is a tree structure defined as follows.

\noindent {\bf 1)} each component of the \emph{F-tree} is a \emph{connected
subgraph} of $\mathcal{G}$. A component can be \emph{mono-connected} or
\emph{bi-connected}.\\
\noindent {\bf 2)} a mono-connected component $MC=(MC.V\subseteq V,MC.AV\in V)$ is a
set of vertices $MC.V\cup MC.AV$ that form a mono-connected subgraph (c.f. Definition \ref{def:mono-graph}) in $\mathcal{G}$. The vertex
$MC.AV$ is called \emph{articulation vertex}. Intuitively, a mono-connected
components represents a tree-like structure rooted in $MC.AV$. Using Theorem
\ref{the:exp}, we can efficiently compute the information flow from all vertices $MC.V$ to $MC.AV$.\\
\noindent {\bf 3)} a bi-connected component $BC=(BC.V,BC.P(v)$, $BC.AV)$ is a set of vertices
$BC.V\cup BC.AV$ of size greater than two that form a bi-connected subgraph
$\mathcal{G}^\prime$ in $\mathcal{G}$ according to Definition \ref{def:bi-graph}. Intuitively, a bi-connected component
represents a subgraph describing a cycle. In this case, we can estimate the likelihood of being connected to the articulation vertex $BC.AV$ using Monte-Carlo sampling in Lemma
\ref{lem:sampling}.
The function $BC.P(v):BC.V\mapsto [0,1]$ maps each vertex $v\in BC.V$ to the
estimated reachability probability $reach(v,BC.AV)$ of $v$ being connected to
$BC.AV$ in $\mathcal{G}$.\\
\noindent {\bf 4)} For each pair of (mono- or bi-connected) components $(C_1,C_2)$, it holds
that the intersection $C_1.V\cap C_2.V=\emptyset$ of vertices is empty. Thus,
each vertex in $V$ is mapped to at most one component's vertex set.\\
\noindent {\bf 5)} Two different components may share the same articulation vertex, and the
articulation vertex of one component may be in the vertex set of another
component.\\
\noindent {\bf 6)} The articulation vertex of the root of a \emph{F-tree} is
$Q \in V$.
\end{definition}
Intuitively speaking, a component is a set of vertices together with an
articulation vertex that all information must flow through in order to reach
$Q$. By our iterative construction algorithm presented in Section \ref{subsec:ct_update}, each component is guaranteed to have such an articulation vertex,
guiding the direction to vertex $Q$. The idea of the \emph{F-tree} is to use
components as virtual nodes, such that all actual vertices of a component
send their information to their articulation vertex. Then the articulation
vertex forwards all information to the next component, until the root of the
tree is reached where all information is sent to articulation vertex $Q$.
\begin{example}
As an example for a \emph{F-tree}, consider Figure \ref{fig:ct_example_graph},
showing a probabilistic graph. For brevity, assume that each edge $e\in E$ has
an existential probability of $p(e)=0.5$ and that all vertices $v\in V$ have an information weight corresponding to their id, e.g. vertex $6$ has a weight of six. A corresponding \emph{F-tree} is shown in Figure \ref{fig:ct_example_ct}.
A mono-connected component is given by $A=(\{1,2,3,6\},Q)$. For this component,
we can exploit Theorem \ref{the:exp} to analytically compute the flow of
information from any vertex in $\{1,2,3,6\}$ to articulation vertex $Q$: vertices $3$ and $6$ are connected to $Q$ with probability $0.5$. Thus, these nodes contributed an expected information flow of $3\cdot 0.5=1.5$ and $6\cdot 0.5=3$ respectively. Vertices $2$ and $3$ are connected to $Q$ with a probability of $0.5\cdot 0.5=0.25$, respectively, following Lemma \ref{lem:path}. Thus, these nodes contribute an expected information of $2\cdot 0.25=0.5$ and $3\cdot 0.25=0.75$. Following Theorem \ref{the:exp}, we can aggregate these probabilities to obtain the expected information flow from vertices $\{1,2,3,6\}$ to articulation vertex $Q$ as $5.75$.

A bi-connected component is defined by $B=(\{4,5\},3)$, representing a sub-graph
having a cycle. Having a cycle, we cannot exploit Theorem  \ref{the:exp} to
compute the flow of a vertex in $\{4,5\}$ to vertex $3$. But we can sample the
subgraph spanned by vertices in $\{3,4,5\}$ to estimate probabilities that
vertices $\{4,5\}$ are connected to articulation vertex $3$ using Lemma
\ref{lem:sampling}. With sufficient samples, this will yield a probability of
around $0.375$ for both vertices to be reached. Again using Theorem
\ref{the:exp}, we compute an information flow of $0.375\cdot 4+0.375\cdot 5=3.375$ to articulation vertex
$3$. Given this expected flow, we can use the mono-connected component $A$ to
compute the expected information analytically that is further propagated from
the articulation vertex $3$ of component $B$ to the articulation vertex $Q$ of
$A$. As the articulation vertex of component $B$ is in the vertex set of
component $A$, component $B$ is a child of component $A$ in Figure \ref{fig:ct_example_ct} since $B$ propagates its information to $A$.
As we have already computed above, the probability of vertex $3$ to be connected to its articulation vertex $Q$ is $0.25$, yielding an information flow worth $3.375\cdot 0.25=0.84375$ units flowing from vertices $\{4,5\}$ to $Q$. Again, exploiting Theorem \ref{the:exp}, we can aggregate this to a total flow of $5.75+0.84375=6.59375$ from vertices \{1,2,3,4,5,6\} to $Q$.

Another bi-connected component is $C=(\{7,8,9\},6)$, for which we can estimate
the information flow from vertices $7$, $8$, and $9$ to articulation vertex $6$
numerically using Monte-Carlo sampling. Since vertex $6$ is in $A$, component $C$ is a child
of $A$. We find another bi-connected component $D=(\{10,11\},9)$, and two more
mono-connected components $E=(\{13,14,15,16\},9)$ and $F=(\{12\},11)$.
\end{example}
In this example, the structure of the \emph{F-tree} allows us to compute or approximate the expected information flow to $Q$ from each vertex. For this purpose, only three small components $B$ and $C$ and $D$ need to be sampled. This is a vast reduction of sampling space compared to a naive Monte-Carlo approach that samples the full graph: rather than sampling a single random variable having $2^{|E|}=2^{19}=524288$ possible worlds, we only need to sample three random variables corresponding to the bi-connected components $B$, $C$ and $D$ having $2^{3}=8$, $2^{4}=16$, and $2^3=8$ possible worlds, respectively. Clearly, this approach reduces the number of edges (marked in red in Figure \ref{fig:ct_example_graph}) that need to be sampled in each sampling iteration. More importantly, our experiments show that this approach of sampling component independently vastly decreases the variance of the total information flow, thus yielding a more precise estimation at the same number of samples.

Having defined syntax and semantics of the \emph{F-tree}, the next section shows
how to maintain the structure of a \emph{F-tree} when additional edges are
selected. It is important to note that we do not intend to insert all edges of a
probabilistic graph $\mathcal{G}$ into the \emph{F-tree}. Rather, we only add
the edges that are selected to compute the maximum flow
$\mbox{MaxFlow}(\mathcal{G},Q,k)$ given a constrained budget of $k$ edges. Thus,
even in a case where all vertices a bi-connected, such as in the initial example
in Figure \ref{fig:runningExample_a}, we note, supported by our experimental
evaluation, that an optimal selection of edges prefers a spanning-tree-like
topology, which synergizes well with our \emph{F-tree}. The next section shows
how to build the structure of the \emph{F-tree} iteratively by adding edges to
an initially empty graph.

The next subsection proposes an algorithm, to update a \emph{F-tree} when a new
edge is selected, starting at a trivial \emph{F-tree} that contains only one
component $(\emptyset,Q)$. Using this edge-insertion algorithm, we will show how
to choose promising edges to be inserted to maximize the expected information
flow. The selection of the edges of the \emph{F-tree} will
be shown in section \ref{sec:algorithms}.\vspace{-0.25cm}
\subsection{Insertion of Edges into a F-tree}
\label{subsec:ct_update}
%
Following Definition \ref{def:CT} of a \emph{F-tree}, each vertex
$v\in\mathcal{G}$ is assigned to either a single mono-connected component (noted by a flag $v.isMC$ in the algorithm below), a single bi-connected component (noted by $v.isBC$), or to no component, and thus disconnected from $Q$, noted by $v.isNew$. To insert a new edge $(v_{src},v_{dest})$, our edge-insertion algorithm derived in this section differs between these cases as follows:
\\\textbf{\bf Case I)} {$v_{src}.isNew$ and $v_{dest}.isNew$}: We omit this
case, as our edge selection algorithms presented in Section \ref{sec:algorithms} always ensure a single connected component and initially the \emph{F-tree} contains only vertex $Q$.
\\\textbf{\bf Case II)} {$v_{src}.isNew$ exclusive-or $v_{dest}.isNew$}: Due to
considering undirected edges, we assume without loss of generality that
$v_{dest}.isNew$. Thus $v_{src}$ is already connected to \emph{F-tree}.

{\bf Case IIa)}: $v_{src}.isMC$: In this case, a new dead end is added
to the mono-connected structure $MC_{src}$ which is guaranteed to remain mono-connected. We add $v_{dest}$ to $MC_{src}.V$.

{\bf Case IIb)}: $v_{src}.isBC$: In this case, a new dead end is added to
the bi-connected structure $BC_{src}$. This dead end becomes a new
mono-connected component $MC=(\{v_{dest}\},v_{src})$. Intuitively speaking, we
know that vertex $v_{dest}$ has no other choice but propagating its information
to $v_{src}$. Thus, $v_{src}$ becomes the articulation vertex of $MC$. The
bi-connected component $BC_{src}$ adds the new mono-connected component $MC$ to
its list of children.
\\\textbf{\bf Case III)} $v_{src}$ and $v_{dest}$ belong to the same component,
i.e.
$C_{src} = C_{dest}$

{\bf Case IIIa)} This component is a bi-connected component $BC$:
  Adding a new edge between $v_{src}$ and $v_{dest}$ within component $BC$ may change
  the reachability $BC.P(v)$ of each vertex $v\in BC.V$ to reach their
  articulation vertex $BC.AV$. Therefore, $BC$ needs to be re-sampled to
  numerically estimate the reachability probability function $P(v)$ for each
  $v\in BC.V$.

{\bf Case IIIb)}: This component is a mono-connected component $MC$: In
  this case, a new cycle is created within a mono-connected component, thus some
  vertices within $MC$ may become bi-connected. We need to (i) identify the set
  of vertices affected by this cycle, (ii) split these vertices into a new
  bi-connected component, and (iii) handle the set of vertices that have been
  disconnected from $MC$ by the new cycle. These three steps are performed by
  the \emph{splitTree}$(MC,v_{src},v_{dest})$ function as follows: (i) We start
  by identifying the new cycle as follows: Compare the (unique) paths of
  $v_{src}$ and $v_{dest}$ to $MC.AV$, and find the first vertex $v_{\wedge}$
  that appears in both paths. Now we know that the new cycle is decribed by
  $path(v_\wedge,v_{src}), path(v_{dest},v_\wedge)$ and the new edge between
  $v_{src}$ and $v_{dest}$.
  (ii) All of these vertices are added to a bi-connected component $BC=(path(v_\wedge, v_{src})\cup
  path(v_{dest},v_\wedge)\setminus v_\wedge,P(v), v_\wedge)$ using $v_\wedge$ as
  their articulation vertex. All vertices in $MC$ having $v_\wedge$ (except
  $v_\wedge$ itself) on their path are removed from $MC$. The probability mass
  function $P(v)$ is estimated by sampling the subgraph of vertices in $BC.V$.
  (iii) Finally, orphans of $MC$ that have been split off from $MC$ due to the
  creation of $BC$ need to be collected into new mono-connected components. Such
  orphans having a vertex of the cycle $BC$ on their path to $MC.AV$ will be
  grouped by these vertices: For each $v_i\in BC.V$, let
  $orphan_i$ denote the set of orphans separated by $v_i$ (separated means $v_i$ being the
  first vertex in $BC.V$ on the path to $MC.AV$). For each such group, we create
  a new mono-connected component $MC_i=(orphan_i,v_i)$. All these new
  mono-connected components with $v_i \in BC.V$ become children of $BC$. If
  $MC.V$ is now empty, thus all vertices of $MC$ have been reassigned to other
  components, then $MC$ is deleted and $BC$ will be appended to the list of
  children of the component $C$ where $BC.AV=v_\wedge \in C.V$. In case of
  $MC.V$ being not empty, we are left over with a mono-connected component $MC$ with
  $v_\wedge \in MC.V$. The new bi-connected component $BC$ becomes a child of
  $MC$.
\\\textbf{\bf Case IV)} $v_{src}$ and $v_{dest}$ belong to different components
$C_{src} \neq C_{dest}$.
Since the \emph{F-tree} is a tree-structure itself, we can identify the lowest
common ancestor $C_{anc}$ of $C_{src}$ and $C_{dest}$.
The insertion of edge $(v_{src},v_{dest})$ has incurred a new cycle $\bigcirc$
going from $C_{anc}$ to $C_{src}$, then to $C_{dest}$ via the new edge, and then
back to $C_{anc}$. This cycle may cross mono-connected and bi-connected
components, which all have to be adjusted to account for the new cycle. We need
to identify all vertices involved to create a new cyclic, thus bi-connected,
component for $\bigcirc$, and we need to identify which parts remain
mono-connected. In the following cases, we adjust all components involved in
$\bigcirc$ iteratively. First, we initialize $\bigcirc=(\emptyset,P,v_{anc})$,
where $v_{anc}$ is the vertex within $C_{anc}$ where the cycle meets if $C_{anc}$ is a mono-connected component, and $C_{anc}.AV$ otherwise.
Let $C$ denote the component that is currently adjusted:

{\bf Case IVa)} $C=C_{anc}$: In this case, the new cycle may enter
$C_{anc}$ from two different articulation vertices. In this case, we apply \emph{Case III}, treating these two vertices as $v_{src}$ and $v_{dest}$, as these two vertices have become connected transitively via the big cycle $\bigcirc$.

{\bf Case IVb)} $C$ is a bi-connected component: In this case $C$
becomes absorbed by the new cyclic component $\bigcirc$, thus $\bigcirc.V=\bigcirc.V \cup C.V$, and $\bigcirc$ inherits all children from $C$. The rational is that all vertices within $C$ are able to access the new cycle.

{\bf Case IVc)} $C$ is a mono-connected component: In this case, one
path in $C$ from one vertex $v$ to $C.AV$ is now involved in a cycle. All vertices
involved in $path(v, C.AV)$ are added to $\bigcirc.V$ and removed from $C$. The
operation \emph{splitTree$(C,v,C.AV)$} is called to create new mono-connected
components that have been split off from $C$ and become connected to $\bigcirc$
via their individual articulation vertices.

In the following, we use the graph of Figure \ref{fig:ct_example_graph} and its
corresponding \emph{F}-tree representation of Figure \ref{fig:ct_example_graph}
to insert additional edges and to illustrate the interesting cases of the
insertion algorithm of Section \ref{subsec:ct_update}.

\begin{figure*}
  \hfil
  \subfigure[\textbf{Case IIb}: Insertion of edge $a$.]{
    \parbox{0.225\textwidth}{
      \centering
      \includegraphics[width=0.225\textwidth]{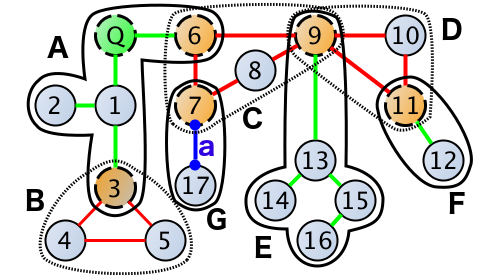}
      \includegraphics[width=0.17\textwidth,
      height=2cm]{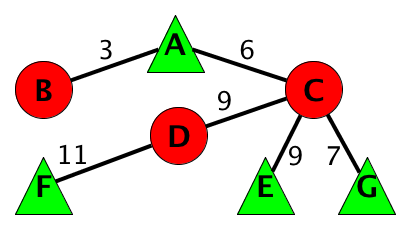} }
    \label{fig:ct_example_graph_a}
    }
  \hfil
  \subfigure[\textbf{Case IIIa}: Insertion of edge $b$.]{
    \parbox{0.225\textwidth}{
      \centering
      \includegraphics[width=0.225\textwidth]{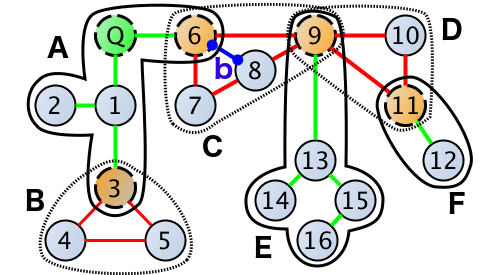}
      \includegraphics[width=0.17\textwidth,
      height=2.25cm]{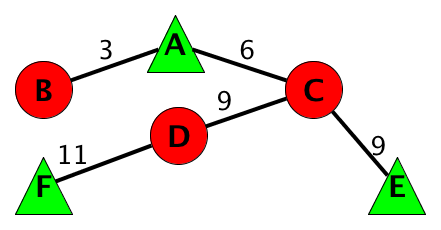} }
    \label{fig:ct_example_graph_b}
    }
    \hfil
  \subfigure[\textbf{Case IIIb}: Insertion of edge~$c$]{
    \parbox{0.225\textwidth}{
      \centering
      \includegraphics[width=0.225\textwidth]{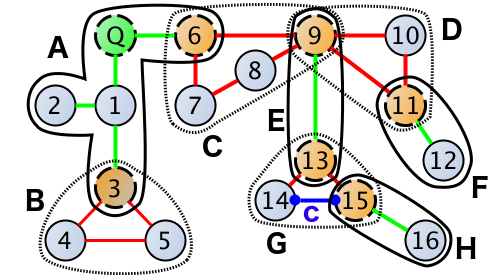}
      \includegraphics[width=0.17\textwidth,
      height=2.25cm]{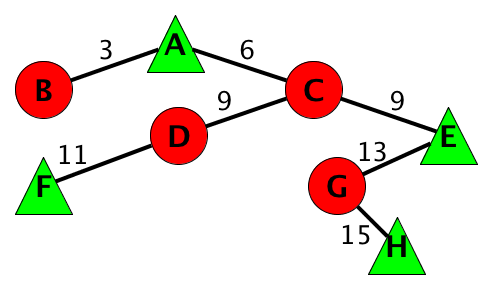} }
    \label{fig:ct_example_graph_c}
    }
  \hfil
  \subfigure[\textbf{Case IVa-c}: Insertion of edge $d$]{
    \parbox{0.225\textwidth}{
      \centering
      \includegraphics[width=0.225\textwidth]{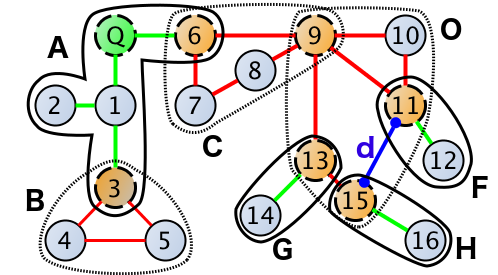}
      \includegraphics[width=0.15\textwidth,
      height=2.25cm]{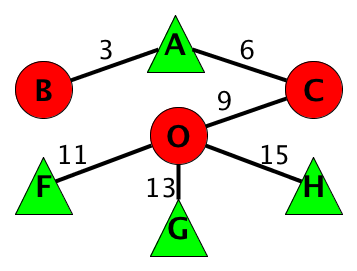} }
    \label{fig:ct_example_graph_d}\vspace{-0.3cm}
    }\vspace{-0.3cm}
    \caption{\vspace{-0.3cm}Examples of edge insertions and \emph{F}-tree update
    cases using the running example of Figure
    \ref{fig:ct_example_graph}.}
\end{figure*}
\subsection{Insertion Examples}
In the following, we use the graph of Figure \ref{fig:ct_example_graph} and its
corresponding \emph{F}-tree (FT) representation of Figure
\ref{fig:ct_example_ct} to insert additional edges and to illustrate the interesting cases of the insertion
algorithm of Section \ref{subsec:ct_update}.

We start by an example for \textbf{\underline{Case II}} in Figure \ref{fig:ct_example_graph_a}. Here, we insert a new edge $a=(7,17)$, thus connecting a new vertex $17$ to the \emph{FT}. Since vertex $7$ belongs to the bi-connected component $BC$, we apply \textbf{\underline{Case IIb}}. A new mono-connected component $G=(\{17\},7)$ is created and added to the children of $BC$.

In Figure \ref{fig:ct_example_graph_b}, we insert a new edge $b=(6,8)$ instead. In this case, the two connected vertices are already part of the \emph{FT}, thus Case II does not apply. We find that both vertices belong to the same component $C$. Thus, \textbf{\underline{Case III}} is used and more specifically, since component $C$ is a bi-connected component $BC$, \textbf{\underline{Case IIIa}} is applied. In this case, no components need to be changed, but the probability function $BC.P(v)$ has to re-approximated, as the probabilities of nodes $7$, $8$ and $9$ will have increased probability of being connected to articulation vertex $6$, due to the existence of new paths arising by inserting edge $b$.

Next, in Figure \ref{fig:ct_example_graph_c}, an edge is inserted between vertices $14$ and $15$. Both vertices belong to the mono-connected component $E$, thus \textbf{\underline{Case IIIb}} is applied here. After insertion of edge $c$, the previously mono-connected component $E=(\{13,14,15,16\},9)$ now contains a cycle involving vertices $13$, $14$ and $15$. (i) We identify this cycle by considering the previous paths from vertices $14$ and $15$ to their articulation vertex $9$. These paths are $(14,13,9)$ and $(15,13,9)$, respectively. The first common vertex on this path is $13$; (ii) We create a new bi-connected component $G=(\{14,15\},13)$, containing all vertices of this cycle using the first common vertex $13$ as articulation vertex. We further remove these vertices except the articulation vertex $13$ from the mono-connected component $E$; the probability function $G.P(v)$ is initialized by sampling the reachability probabilities within $G$; and component $G$ is added to the list of children of $E$; (iii) Finally, orphans need to be collected. These are vertices in $E$, which have now become bi-connected to $Q$, because their (previously unique) path to their former articulation vertex $9$ crosses a new cycle. We find that one vertex, vertex $16$, had $15$ as the first removed vertex on its path to $9$. Thus, vertex $16$ is moved from component $E$ into a new mono-connected component $H=(\{16\},15)$, terminating this case. Summarizing, vertex $16$ in component $H$ now reports its information flow to vertex $15$ in component $G$, for which the information flow to articulation vertex $13$ in component $G$ is approximated using Monte-Carlo sampling. This information is then propagated analytically to vertex $9$ in component $E$, subsequently, the remaining flow that has been propagated all this way, is approximatively propagated to articulation vertex $6$ in component $C$, which allows to analytically compute the flow to articulation vertex $Q$.

For the last case, \textbf{\underline{Case IV}}, considering Figure \ref{fig:ct_example_graph_d}, where a new edge $d=(11,15)$ connected two vertices belonging to two different components $D$ and $E$. We start by identifying the cycle that has been created within the \emph{FT}, involving components $D$ and $E$, and meeting at the first common ancestor component $C$. For each of these components in the cycle $(D,C,E)$, one of the sub-cases of Case IV is used. For component $C$, we have that $C=C_{anc}$ is the common ancestor component, thus triggering \textbf{\underline{Case IVa}}. We find that both components $D$ and $E$ used vertex $9$ as their articulation vertex $v_{anc}$. Thus, the only cycle incurred in component $C$ is the (trivial) cycle $(9)$ from vertex $9$ to itself, which does not require any action. We initialize the new bi-connected component $\bigcirc=(\emptyset,\perp,9)$, which initially holds no vertices, and has no probability mass function computed yet (the operator $\perp$ can be read as null or not-defined) and uses $v_{anc}=9$ as articulation vertex. For component $D$, we apply \textbf{\underline{Case IVb}}, as $D$ is a bi-connected component, it becomes absorbed by a new bi-connected component $\bigcirc$, now having $\bigcirc=(\{10,11\},\perp,9)$. For the mono-connected component $E$ \textbf{\underline{Case IVc}} is used. We identify the path within $E$ that is now involved in a cycle, by using the path $(15,13,9)$ between the involved vertex $15$ to articulation vertex $9$. All nodes on this path are added to $\bigcirc$, now having $\bigcirc=(\{10,11,15,13\},\perp,9)$. Using the $splitTree(\cdot)$ operation similar to Case III, we collect orphans into new mono-connected components, creating $G=(\{14\},13)$ and $H=(\{16\},15)$ as children of $\bigcirc$. Finally, Monte-Carlo sampling is used to approximate the probability mass function $\bigcirc.P(v)$ for each $v\in\bigcirc.V$.

\section{Optimal Edge Selection}
\label{sec:algorithms}
The previous section presented the \emph{F-tree}, a data structure to compute the expected information flow in a probabilistic graph. Based on this structure, heuristics to find a near-optimal set of $k$ edges maximizing the information flow $\mbox{MaxFlow}(\mathcal{G},Q,k)$ to a vertex $Q$ (see Definition \ref{def:maxinfflow}) are presented in this section. Therefore, we first present a \emph{Greedy}-heuristic to iteratively add the locally most promising edges to the current result. Based on this \emph{Greedy} approach, we present improvements, aiming at minimizing the processing cost while maximizing the expected information flow.
\subsection{Greedy Algorithm}
\label{subsec:greedy}
Aiming to select edges incrementally, the \emph{Greedy} algorithm initially uses the probabilistic graph $G_0=(V,E_0=\emptyset,P)$, which contains no edges. In each iteration $i$, a set of candidate edges \emph{candList} is maintained, which contains all edges that are connected to $Q$ in the current graph $G_i$, but which are not yet selected in $E_i$. Then, each iteration selects an edge $e$ in addition maximizing the information flow to $Q$, such that $G_{i+1}=(V,E_i\cap e,P)$, where
\begin{equation}\label{eq:addedge}
e=\argmax_{e\in candList} E(\mbox{flow}(Q,(V,E_i\cap e,P))).
\end{equation}
For this purpose, each edge $e\in candList$ is probed, by inserting it into the current \emph{F-tree} using the insertion method presented in Section \ref{sec:ct}. Then, the gain in information flow incurred by this insertion is estimated by equation \ref{lem:sampling}. After $k$ iterations, the graph $G_k=(V,E_k,P)$ is returned.

\subsection{Component Memoization}
\label{subsec:memo}
We introduce an optimization reducing the number of computations for bi-connected components for which their reachability probabilities have to be estimated using Monte-Carlo sampling, by exploiting stochastic independence between different components in the \emph{F-tree}. During each \emph{Greedy}-iteration, a whole set of edges $candList$ is probed for insertion. Some of these insertions may yield new cycles in the \emph{F-tree}, resulting from cases III and IV. Using component \emph{Memoization}, the algorithm memoizes, for each edge $e$ in $candList$, the probability mass function of any bi-connected component $BC$ that had to be sampled during the last probing of $e$. Should $e$ again be inserted in a later iteration, the algorithm checks if the component has changed, in terms of vertices within that component or in terms of other edges that have been inserted into that component. If the component has remained unchanged, the sampling step is skipped, using the memoized estimated probability mass function instead. 
\subsection{Sampling Confidence Intervals}
\label{subsec:CI}
A Monte Carlo sampling is controlled by a parameter \emph{samplesize} which corresponds to the number of samples taken to approximate the information flow of a bi-connected component to its articulation vertex. In each iteration, we can reduce the amount of samples by introducing confidence intervals for the information flow for each edge $e\in candList$ that is probed. The idea is to prune the sampling of any probed edge $e$ for which we can conclude that, at a sufficiently large level of significance $\alpha$, there must exist another edge $e^\prime\neq e$ in $candList$, such that $e^\prime$ is guaranteed to have a higher information flow than $e$, based on the current number of samples only. To generate these confidence intervals, we recall that, following Equation \ref{eq:infflowsampling} the expected information flow to $Q$ is the sample-average of the sum of information flow of each individual vertex. For each vertex $v$, the random event of being connected to $Q$ in a random possible world follows a binomial distribution, with an unknown success probability $p$. To estimate $p$, given a number $S$ of samples and a number $0\leq s\leq S$ of 'successful' samples in which $Q$ is reachable from $v$, we borrow techniques from statistics to obtain a two sided $1-\alpha$ confidence interval of the true probability $p$. A simple way of obtaining such confidence interval is by applying the \emph{Central Limit Theorem} of Statistics to approximate a binomial distribution by a normal distribution.
\vspace{.1cm}
\begin{definition}[$\alpha$-Significant Confidence Interval]
Let $S$ be a set of possible graphs drawn from the probabilistic graph $\mathcal{G}$, and let $\hat{p}:=\frac{s}{S}$ be the fraction of possible graphs in $S$ in which $Q$ is reachable from $v$. With a likelihood of $1-\alpha$, the true probability $E({\updownarrow}(Q,v,\mathcal{G}))$ that $Q$ is reachable from $v$ in the probabilistic graph $G$ is in the interval
\begin{equation}\label{eq:clt}
\hat{p}\pm z\cdot\sqrt{\hat{p}(1-\hat{p})},
\end{equation}
where $z$ is the $100\cdot (1-0.5\cdot \alpha)$ percentile of the standard normal distribution. We denote the lower bound as $E_{lb}({\updownarrow}(Q,v,\mathcal{G}))$ and the upper bound as $E_{ub}({\updownarrow}(Q,v,\mathcal{G}))$. We use $\alpha=0.01$.
\end{definition}
\vspace{0.1cm}
To obtain a lower bound of the expected information flow to $Q$ in a graph $G$, we use the sum of lower bound flows of each vertex using Equation \ref{eq:infflowsampling} to obtain 
\vspace{0.1cm}
$$ E_{lb}(\mbox{flow}(Q,\mathcal{G}))=\sum_{v\in
V}E_{lb}({\updownarrow}(Q,v,\mathcal{G}))\cdot W(v) $$
as well as the upper bound
\vspace{0.1cm}
$$
E_{ub}(\mbox{flow}(Q,\mathcal{G}))=\sum_{v\in V}E_{ub}({\updownarrow}(Q,v,\mathcal{G}))\cdot W(v)
$$
Now, at any iteration $i$ of the \emph{Greedy} algorithm, for any candidate edge $e\prime\in candList$ having an information flow lower bounded by $lb:=E_{lb}(\mbox{flow}(Q,\mathcal{G}_i)\cup e)$, we prune any other candidate edge $e^\prime\in candList$ having an upper bound $ub:=E_{ub}(\mbox{flow}(Q,\mathcal{G}_i\cup e^\prime))$ iff $lb>ub$. The rational of this pruning is that, with a confidence of $1-\alpha$, we can guarantee that inserting $e\prime$ yields less information gain than inserting $e$. To ensure that the \emph{Central Limit Theorem} is applicable, we only apply this pruning step if at least 30 sample worlds have been drawn for both probabilistic graphs.
\subsection{Delayed Sampling}\label{subsec:DS}
For the last heuristic, we reduce the number of Monte-Carlo samplings that need to be performed in each iteration of the \emph{Greedy} Algorithm in Section \ref{subsec:greedy}. In a nutshell, the idea is that an edge, which yields a much lower information gain than the chosen edge, is unlikely to become the edge having the highest information gain in the next iteration. For this purpose, we introduce a delayed sampling heuristic. In any iteration $i$ of the \emph{Greedy} Algorithm, let $e$ denote the best selected edge, as defined in Equation \ref{eq:addedge}. For any other edge $e^\prime\in candList$, we define its potential $pot(e^\prime):=\frac{E(\mbox{flow}(Q,(V,E_i\cap e^\prime,P))}{E(\mbox{flow}(Q,(V,E_i\cap e,P))}$, as the fraction of information gained by adding edge $e\prime$ compared to the best edge $e$ which has been selected in an iteration. Furthermore, we define the cost $cost(e^\prime)$ as the number of edges that need to be sampled to estimate the information gain incurred by adding edge $e^\prime$. If the insertion of $e^\prime$ does not incur any new cycles, then $cost(e^\prime)$ is zero. Now, after iteration $i$ where edge $e^\prime$ has been probed but not selected, we define a sampling delay $$ d(e^\prime)=\lfloor log_c\frac{cost(e^\prime)}{pot(e^\prime)}\rfloor, $$ which implies that $e^\prime$ will not be considered as a candidate in the next $d$ iterations of the \emph{Greedy} algorithm of Section \ref{subsec:greedy}. This definition of delay, makes the (false) assumption that the information gain of an edge can only increase by a factor of $c>1$ in each iteration, where the parameter $c$ is used to control the penalty of having high sampling cost and having low information gain. As an example, assume an edge $e^\prime$ having an information gain of only $1\%$ of the selected best edge $e$, and requiring to sample a new bi-connected component involving $10$ edges upon probing. Also, we assume that the information gain per iteration (and thus by insertion of other edges in the graph), may only increase by a factor of at most $c=2$. We get $d(e^\prime)=\lfloor log_2\frac{10}{0.01}\rfloor=\lfloor log_2 1000\rfloor=9$. Thus, using delayed sampling and having $c=2$, edge $e^\prime$ would not be considered in the next nine iterations of the edge selection algorithm. It must be noted that this delayed sampling strategy is a heuristic only, and that no correct upper-bound $c$ for the change in information gain can be given. Consequently, the delayed sampling heuristic may cause the edge having the highest information gain not to be selected, as it might still be suspended. Our experiments show that even for low values of $c$ (i.e., close to $1$), where edges are suspended for a large number of iterations, the loss in information gain is fairly low.
\vspace{-0.35cm}
\section{Evaluation}
\label{sec:evaluation}
In this section, we empirically evaluate efficiency and effectiveness of our proposed solutions to compute a near-optimal subgraph of an uncertain graph to maximize the information flow to a source node $Q$, given a constrained number of edges, according to Definition \ref{def:maxinfflow}. As motivated in the introducing Section \ref{sec:intro}, two main application fields of information propagation on uncertain graphs are: \textbf{i)} information/data propagation in spatial networks, such as wireless networks or a road networks, and \textbf{ii)} information/belief propagation in social networks. These two types of uncertain graphs have extremely different characteristics, which require separate evaluation. A spatial network follows a locality assumption, constraining the set of pairwise reachable nodes to a spatial distance. Thus, the average shortest path between a pair of two randomly selected nodes can be very large, depending on the spatial distance. In contrast, a social network has no locality assumption, thus allowing to move through the network with very few hops. As a result, without any locality assumption, the set of nodes reachable in $k$-hops from a query node may grow exponentially large in the number of hops. In networks following a locality assumption, this number grows polynomial, usually quadratic (in sensor and road networks on the plane) in the range $k$, as the area covered by a circle is quadratic to its radius. Our experiments have shown, that the locality assumption - which clearly exists in some applications - has tremendous impact on the performance of our algorithms, including the baseline. Consequently, we evaluate both cases separately.

All experiments were evaluated on a system with Linux 3.16.7, x86\_64, Intel(R) Xeon(R) CPU E5,2609, 2.4GHz. All algorithms were implemented in Python $2.7$. Dependencies: NetworkX 1.11, numpy 1.13.1.
\subsection{Dataset Descriptions}
\label{subsec:datasets}
This section describes our employed uncertain graph datasets. For both cases, i.e. with locality assumption and no-locality assumption, we use synthetic and real datasets.

{\bfseries Synthetic Datasets: No locality assumption.} Our first model \emph{Erd\"os} is based on the idea of the \textit{Erd\"os-R\'{e}nyi model} \cite{erdos}, distributing edges independently and uniformly between nodes. Probabilities of edges are chosen uniformly in $[0,1]$ and weights of nodes are integers selected uniformly from $[0,10]$. It is known that this model is not able to capture real human social networks \cite{leskovec2005realistic}, due to the lack of modeling long tail distributions produced by ``social animals''. Thus, we use this data generation only in our first set of experiments and we employ real social network data later. 

{\bfseries Synthetic Datasets: Locality assumption.} We use two synthetic data generating scheme to generate spatial networks. For the first data generating scheme - denoted by \emph{partitioned} -, each vertex has the same degree $d$. The dataset is partitioned into $n=2 \cdot \frac{|V|}{d}$ partitions $P_0,...,P_{n-1}$ of size $d$. Each vertex in partition $P_i$ is connected to all and only vertices in the previous and next partition $P_{(i-1) \mod n}$ and $P_{(i+1) \mod n}$. This data generation allows to control the diameter of a resulting network, which is guaranteed to be equal to $n-1$.

For a more realistic synthetic data set - denoted as \emph{WSN} -, we simulate a wireless sensor network. Here, vertices have two spatial coordinates selected uniformly in $[0,1]$. Using a global parameter $\epsilon$, any vertex $v$ is connected to all vertices located in the $\epsilon$-distance of $v$ using Euclidean distance.
For both settings, the probabilities of edges are chosen uniformly in [0, 1]. 

{\bfseries Real Datasets: No locality assumption.} We use the \emph{social circles of Facebook dataset} published in \cite{leskovec2012learning}. This dataset is a snapshot of the social network of Facebook - containing a subgroup of $535$ users which form a social circle, i.e. a highly connected subgraph, having $10k$ edges. These users have excessive number of 'friends'. Yet, it has been discussed in  \cite{Wellman01doesthe} that the number of real friends that influence, affect and interact with an individual is limited. According to this result, and due to the lack of better knowledge which people of this social network are real friends, we apply higher edge probabilities uniformly selected in $[0.5;1.0]$ to $10$ random adjacent nodes of each user. Due to symmetry, an average user has $20$ such high probabilities 'close friends'. All other edges are assigned edge probabilities uniformly selected in $]0;0.5]$.

For our experiments on \emph{collaboration network data}, we used the computer science bibliography \emph{DBLP}. The structure of this dataset is such that if an author $v_i$ co-authored a paper with author $v_j$, where $i \neq j$, the graph contains a undirected edge $e$ from $v_i$ to $v_j$. If a paper is co-authored by $k$ authors this generates a completely connected (sub)graph (clique) on $k$ nodes. This dataset has been published in \cite{YangL12}. Probabilities on edges are uniformly distributed in $[0;1]$. The graph consists of $|V|=317,080$ vertices and $|E|=1,049,866$ edges.

Finally, we evaluated our methods also on the \emph{YouTube social network}, first published in \cite{mislove-2007-socialnetworks}. In this network, edges represent friendship relationships between users. The graph consists of $|V|= 1,134,890$ vertices and $|E|=2,987,624$ edges. Again, the probabilities on edges are uniformly distributed in $[0;1]$.

{\bfseries Real Datasets: Locality assumption.} For our experiments on \emph{spatial networks}, we used the road network of San Joaquin County\footnote{https://www.cs.utah.edu/~lifeifei/SpatialDataset.htm}, having $|V|=18,263$ vertices and $|E|=23,874$ edges. The vertices of the graph are road intersections and edges correspond to connections between them. In order to simulate real sensor nodes located at road intersections, we have connected vertices that are spatially distant from each other have a lower chance to successfully communicate. To give an example,  for two vertices having a distance of $a$ in meters, we set the communication probability to $e^{-0.001a}$. Thus, a $10m$, $100m$ and $1km$ distance will yield a probability of $e^{-0.01}=99\%$, $e^{-0.1}=90\%$, and $e^{-1}=36\%$, respectively.
\vspace{-0.4cm}
\begin{figure}[t]
  	\subfigure[Changing Graph Size with locality assumption]{
      \includegraphics[width=.23\textwidth,
       height=3cm]{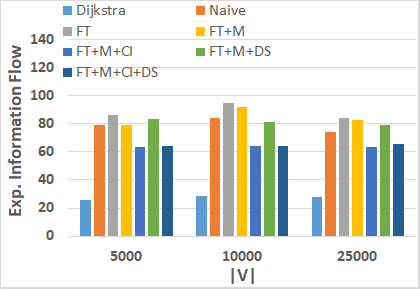}
      \includegraphics[width=.23\textwidth,
       height=3cm]{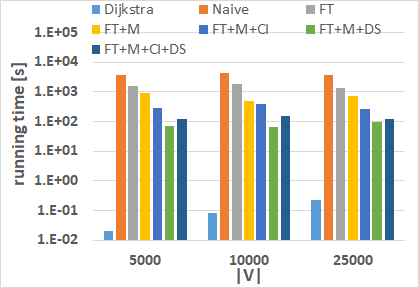}
    \label{fig:experiment_graphSize_with}
   }
   \subfigure[Changing Graph Size without locality assumption]{
       \includegraphics[width=.23\textwidth,
       height=3cm]{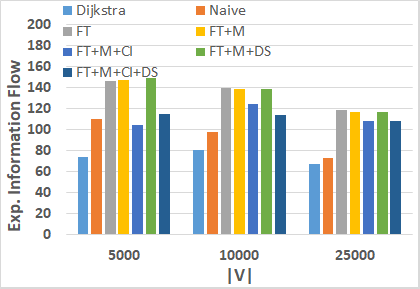}
       \includegraphics[width=.23\textwidth,
       height=3cm]{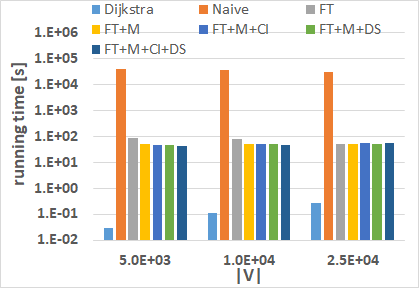}
    \label{fig:experiment_graphSize_without}
   }
  \caption{Experiments with changing graph size}\vspace{-0.4cm}
  \label{fig:graphSize}
\end{figure}

\subsection{Evaluated Algorithms}
The algorithms that we evaluate in this section are denoted and described as follows:

{\bfseries Naive} As proposed in \cite{Sampling06,emrich2012exploration} the first competitor \emph{Naive} does not utilize the strategy of component decomposition of Section \ref{sec:reachability} and utilizes a pure sampling approach to estimate reachability probabilities. To select edges, the \emph{Naive} approach chooses the locally best edge, as shown in Section \ref{sec:algorithms}, but does not use the \emph{F-tree} representation presented in Section \ref{sec:ct}. We use a constant Monte-Carlo sampling size of $1000$ samples.

{\bfseries Dijkstra} Shortest-path spanning trees \cite{sohrabi2000protocols} are used to interconnect a wireless sensor network (WSN) to a sink node. To obtain a maximum probability spanning tree, we proceed as follows: the cost $w(e)$ of each edge $e\in E$ is set to $w(e) = -\log(P(e))$. Running the traditional Dijkstra algorithm on the transformed graph starting at node $Q$ yields, in each iteration, a spanning tree which maximizes the connectivity probability between $Q$ and any node connected to $Q$ \cite{LinkDiscovery06}. Since, in each iteration, the resulting graph has a tree structure, this approach can fully exploit the concept of Section \ref{sec:reachability}, requiring no sampling step at all.

{\bfseries FT } employs the \emph{F-tree} proposed in Section \ref{sec:ct} to estimate reachability probabilities. To sample bi-connected components, we draw $1000$ samples for a fair comparison to \emph{Naive}. All following \emph{FT}-Algorithms build on top of \emph{FT}.

{\bfseries FT+M} additionally memoizes for each candidate edge $e$ the $pdf$ of the corresponding bi-connected component from the last iteration (cf. Section \ref{subsec:memo}).

{\bfseries FT+M+CI} further ensures that probing of an edge is stopped whenever another edge has a higher information flow with a certain degree of confidence, as explained in Section \ref{subsec:CI}.

{\bfseries FT+M+DS} instead tries to minimize the candidate edges in an iteration by leaving out edges that had a small information gain/cost-ratio in the last iteration (cf. Section \ref{subsec:DS}). Per default, we set the penalization parameter to $c=2$.

{\bfseries FT+M+CI+DS} is a combination of all the above concepts.
\begin{figure}[t]
  \subfigure[Changing Graph Density with locality assumption]{
      \centering
      \includegraphics[width=.24\textwidth,
       height=3cm]{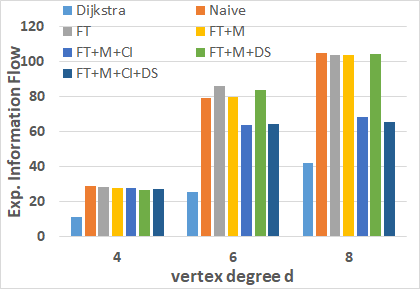}
      \includegraphics[width=.24\textwidth,
       height=3cm]{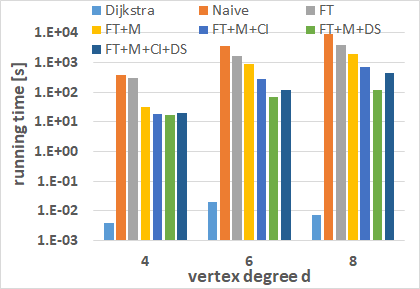}
    \label{fig:experiment_graphdensity_with}
   }
  \subfigure[Changing Graph Density without locality assumption]{

      \centering
       \includegraphics[width=.24\textwidth,
       height=3cm]{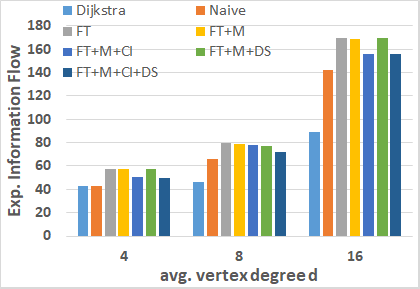}
       \includegraphics[width=.24\textwidth,
       height=3cm]{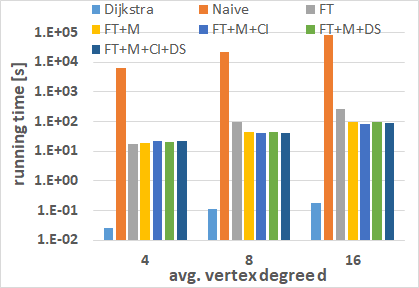}

    \label{fig:experiment_graphdensity_without}
    }
  \caption{Experiments with changing graph density}\vspace{-0.3cm}
  \label{fig:graphDensity}
\end{figure}

\subsection{Experiments on Synthetic Data}
\label{subsec:experiments_synthetic}
In this section, we employ randomly generated uncertain graphs. We generate graphs having no-locality-assumption using \emph{Erd\"os} graphs and having locality assumption using the \emph{partitioned} generation. Both generation approaches are described in Section \ref{subsec:datasets}. This data generation allows us to scale the topology of the uncertain graph $\mathcal{G}$ in terms of size and density. Unless specified otherwise, we use a graph size of $|V|=10,000$, a vertex degree of $6$ and a budget of edges $k=200$ in a all experiments on synthetic data.

{\bf Graph Size. }We first scale the size $|V|$ of the synthetic graphs. Figure \ref{fig:experiment_graphSize_with}
shows the information flow (left-hand-side) and running time (right-hand-side) for our synthetic data set following the locality assumption. First, we note that the \emph{Dijkstra}-based shortest-path spanning tree yields an extremely low information flow, far inferior to all other approaches. The reason is that a spanning tree allows no room for failure of edges: whenever any edge fails, the whole subtree become disconnected from $Q$. We further note that all other algorithms, including the \emph{Naive} one, are oblivious to the size of the network, in terms of information flow and running time. The reason is that, due to the locality assumption, only a local neighborhood of vertices and edges is relevant, regardless of the global size of the graph. Additionally, we see that the delayed sampling heuristic (\emph{DS}) yields a significant running time performance gain, whilst keeping the information flow constantly high. The combination of all heuristics (\emph{FT+M+CI+DS}) yields significant loss of information flow due to the pruning strategy of the confidence interval heuristic (\emph{CI}).
Figure \ref{fig:experiment_graphSize_without}, shows the performance in terms of information gain and running time for the \emph{Erd\"os} graphs having no locality assumption. We first observe that \emph{Dijkstra} and \emph{Naive} yield a significantly lower information flow than our proposed approaches. For \emph{Dijkstra}, this result is again contributed to the constraint of constructing a spanning tree, and thus not allowing any edges to connect the flow between branches. For the \emph{Naive} approach, the loss in information flow requires a closer look. This approach samples the whole graph only $1000$ times, to estimate the information flow. In contrast, our \emph{F-tree} approach samples each individual bi-connected component $1000$ times. Why is the later approach more accurate? A first, informal, explanation is that, for a constant sampling size, the information flow of a small component can be estimated more accurately than for a large component. Intuitively, sampling two independent components $n$ times each, yields a total of $n^2$ combinations of samples of their joint distribution. More formally, this effect is contributed to the fact that the variance of the sum of two random variables increases as their correlation increases, since $Var(\sum_{i=1}^n X_i)=\sum_{i=1}^n Var(X_i)+2\sum_{1\leq i<j\leq n} Cov(X_i,X_j)$ \cite{navidi2006statistics}. Furthermore, the \emph{Naive} approach also incurs an approximation error for mono-connected components, for which all \emph{F-tree} (\emph{FT}) approaches compute the exact flow analytically. We further see that the \emph{Naive} approach, which has to sample the whole graph, is by far the most inefficient. On the other end of the scope, the \emph{Dijkstra} approach, which is able to avoid sampling entirely by guaranteeing a single mono-connected component, is the fastest in all experiments, but at the cost of information flow. 
We also see that in Figure \ref{fig:experiment_graphSize_without} all algorithm stay in the same order of magnitude in their running time and information flow as the graph increases. This is due to the fact that in this experiment we stay constant in the average vertex degree, i.e. $deg(v_i)\approx 10$ for all $v_i \in V$. We also observe that the \emph{CI} heuristic yields an overhead of computing the intervals whilst losing information gain due to its rigorous pruning strategy.
\begin{figure}[ht]
  \subfigure[Changing budget k with locality assumption]{
      \centering
      \includegraphics[width=.24\textwidth,
      height=3cm]{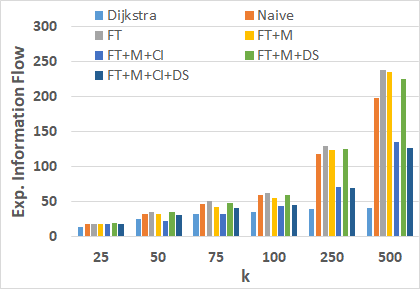}
      \includegraphics[width=.24\textwidth,
      height=3cm]{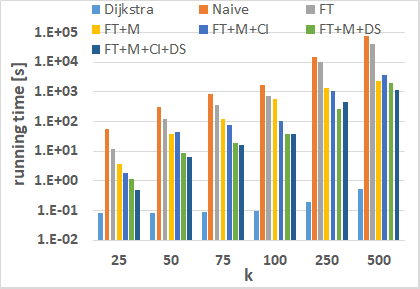}
    \label{fig:experiment_budget_with}
   }
  \subfigure[Changing budget k without locality assumption]{
      \centering
       \includegraphics[width=.24\textwidth,
      height=3cm]{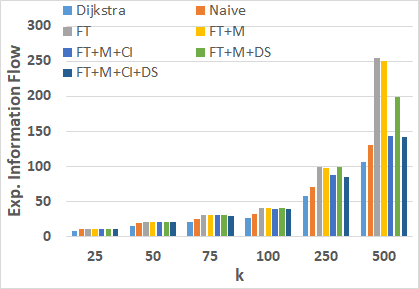}
       \includegraphics[width=.24\textwidth,
      height=3cm]{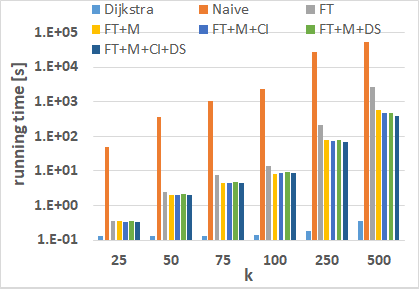}
    \label{fig:experiment_budget_without}
    }
  \caption{Experiments with changing Budget}\vspace{-0.3cm}
  \label{fig:budget}
\end{figure}

\begin{figure}[ht]
  \subfigure[WSN $\epsilon=0.05$]{
      \centering
        \includegraphics[width=.24\textwidth,
        height=3cm]{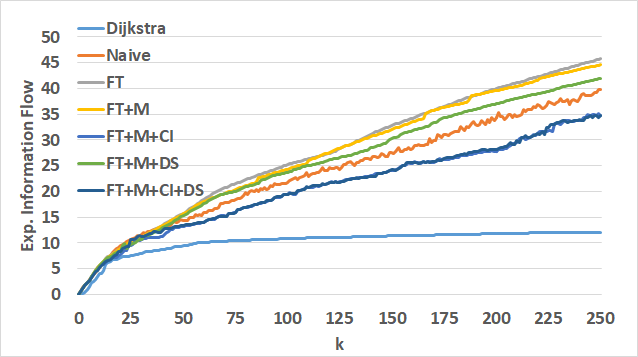}
        \includegraphics[width=.24\textwidth,
        height=3cm]{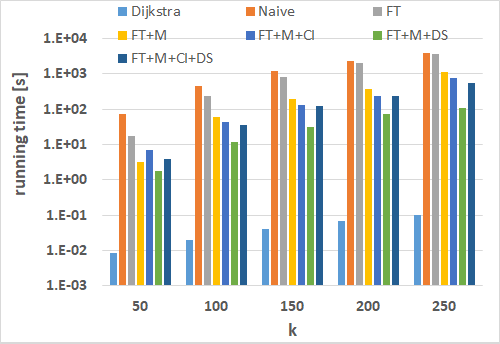}
    \label{fig:experiment_wsn_0-05}
    }
  \subfigure[WSN $\epsilon=0.07$]{
      \centering
      \includegraphics[width=.24\textwidth,
        height=3cm]{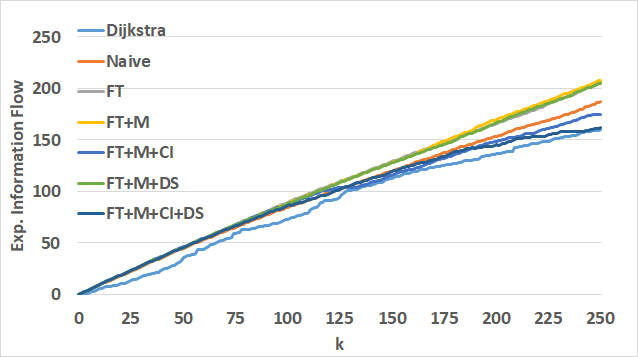}
      \includegraphics[width=.24\textwidth,
        height=3cm]{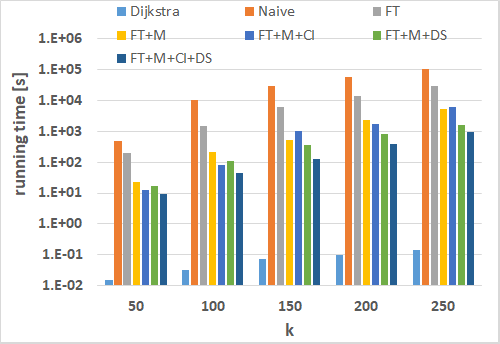}
    \label{fig:experiment_wsn_0-07}
    }\vspace{-0.25cm}
  \caption{Experiments in synthetic Wireless Sensor
  Networks}\vspace{-0.5cm}
  \label{fig:experiment_wsn}
\end{figure}
\begin{figure*}[t]
  \hfil
  \subfigure[San Joaquin County \scriptsize Road~Network]{
    \parbox{0.225\textwidth}{
      \centering
        \includegraphics[width=.24\textwidth,
        height=3cm]{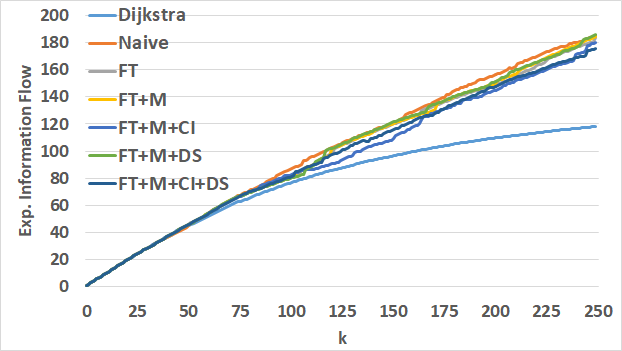}
        \includegraphics[width=.24\textwidth,
        height=3cm]{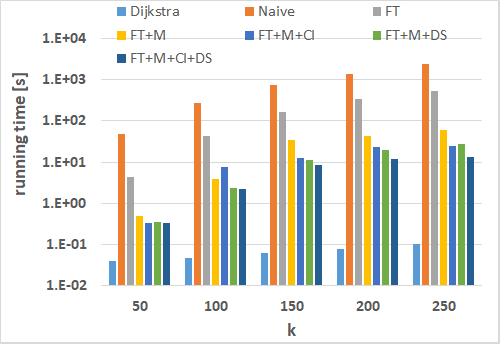}
        }
    \label{fig:experiment_sanJoaquin}
    }
  \hfil
  \subfigure[Circle of Friends - Facebook]{
    \parbox{0.225\textwidth}{
      \centering
        \includegraphics[width=.24\textwidth,
        height=3cm]{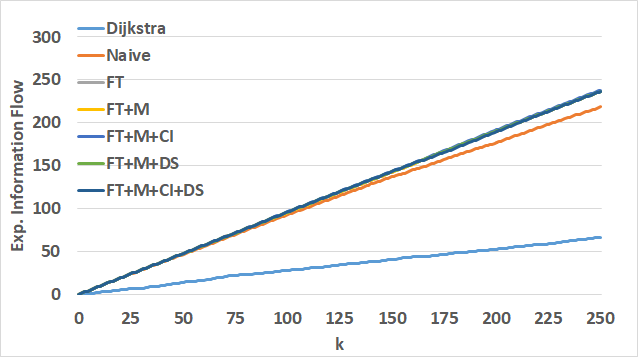}
        \includegraphics[width=.24\textwidth,
        height=3cm]{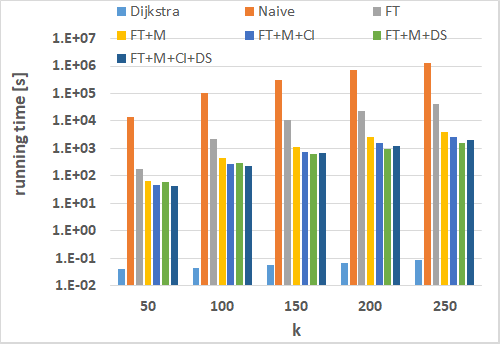}
        }
    \label{fig:experiment_facebook}
    }
  \hfil
  \subfigure[DBLP]{
    \parbox{0.225\textwidth}{
      \centering
        \includegraphics[width=.24\textwidth,
        height=3cm]{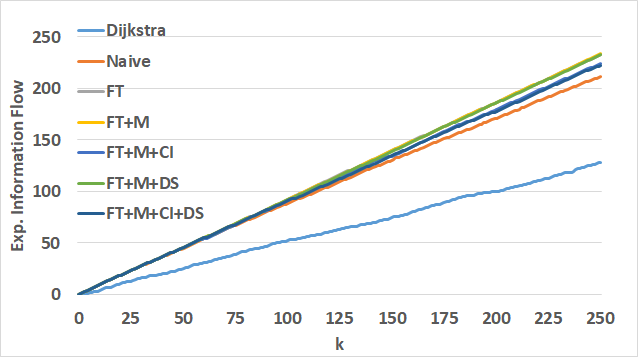}
        \includegraphics[width=.24\textwidth,
        height=3cm]{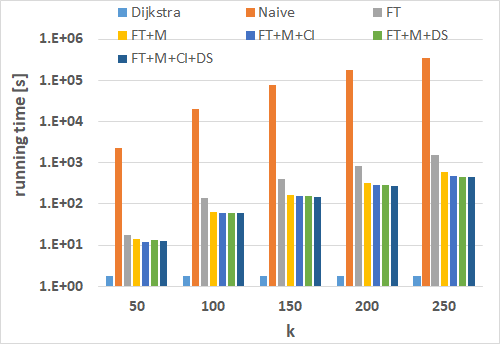}
        }
    \label{fig:experiment_dblp}
    }
  \hfil
  \subfigure[YouTube]{
    \parbox{0.225\textwidth}{
      \centering
        \includegraphics[width=.24\textwidth,
        height=3cm]{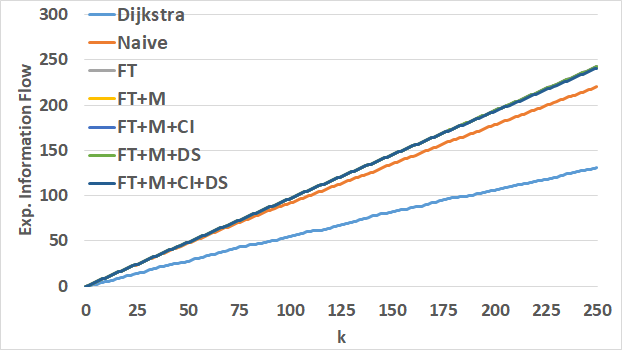}
        \includegraphics[width=.24\textwidth,
        height=3cm]{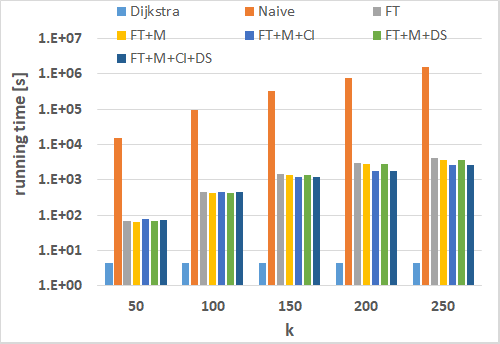}
        }
    \label{fig:experiment_yt}
    }\vspace{-0.18cm}
  \caption{Experiments on Real World Datasets}\vspace{-0.45cm}
  \label{fig:experiment_realWorld}
\end{figure*}

{\bf Graph Density.} In this experiment, we scale the average degree of vertices. In the case of graphs following the locality assumption, the gain in information flow of all proposed solutions compared to \emph{Dijkstra} is quite significant as shown in Figure \ref{fig:experiment_graphdensity_with}, particularly when the degree of vertices is low. 
This is the case in road networks, but also in most sensor and ad hoc networks. The reason is that, in such case, spanning trees gain quickly in height as edges are added, thus incurring low-probability paths that require circular components to connect branches to support the information flow. For larger vertex degrees, the optimal solutions become more star-like, thus becoming more tree-like.  For a small vertex degree, we observe also that the same bi-connected-components are occurring in consecutive iterations resulting in a running time gain for the memoization approach. As the complexity of the graph grows, the gap between the \emph{FT} and \emph{FT+M} shrinks as more candidates result in an increased number of possibilities where bi-connected components can occur and thus make cached results for bi-connected components obsolete.
The results shown in Figure \ref{fig:experiment_graphdensity_without} indicate that the \emph{Dijkstra} approach is able to find higher quality result in graphs without locality assumption for small (average) vertex degrees. This is contributed to the fact that in graphs with this setting, the optimal result will be almost tree-like, having only a few inter-branch edges. 
The algorithms \emph{FT+M+CI} and \emph{FT+M+CI+DS} yield a trade-off between running time and accuracy, i.e. we observe a slightly loss in information gain coming along with a better running time for a setting with a larger (average) vertex degree.


{\bf Scaling of parameter $k$.} In the next experiments, shown in Figure \ref{fig:budget}, we show how the budget $k$ of edges affects the performance of the proposed algorithms. In the case of a network following the locality assumption, we observe in Figure \ref{fig:experiment_budget_with} that the overall information gain per additional edge slowly decreases. This is clear, since in average, the hop distance to $Q$ increases as more edges are added, increasing the possible links of failure, thus decreasing the likelihood of propagating a nodes information to $Q$. We observe that the effectiveness of \emph{Dijkstra} in the locality setting quickly deteriorates, since the constraint of returning a tree structure leads to paths between $Q$ and other connected nodes that become increasingly unlikely, making the \emph{Dijkstra} approach unable to handle such settings. 
Here, the memoization heuristic $M$ perform extremely well. A sever loss of information gain is observed when running \emph{FT+M+CI} and \emph{FT+M+CI+DS} due to its pruning policy. Later one is the best in terms of running time. 



In contrast, using a network following no locality assumption in Figure \ref{fig:experiment_budget_without}, we see that both \emph{Dijkstra} and \emph{Naive} yield an low information gain for a large budget $k$. For \emph{Dijkstra}, the reason is that for large values of $k$, the depths of the spanning tree, which is lower bounded by $log_d k$, incurs longer paths without any backup route in the case of a single failure. For the \emph{Naive} approach, the low information gain is contributed to the high variance of sampling the information flow of the whole graph for each edge selection. Further, we see that the \emph{Naive} approach further suffers from an extreme running time, requiring to re-sample the whole graph in each iteration. The $F$-tree in combination with the memoization give a consistently high information gain while having a low running time. The heuristics suffering from a loss in information gain yield a slightly better running time.



{\bf Synthetic Wireless Sensor Networks (WSN).} In this experiments, we simulate real world wireless sensor networks (WSN). We embed a number of vertices - here $|V|=1,000$ - according to a uniform distribution in a spatial space $[0;1] \times [0;1]$. For each vertex, we observe adjacent vertices being in its proximity which is regulated by an additional parameter $\epsilon$. Figure \ref{fig:experiment_wsn} shows the results. We observe nearly the same behavior as in Figure \ref{fig:experiment_graphdensity_with}. As the parameter $\epsilon$ is a regulator for the graph's interconnectivity, we observe again a fair trade-off of information gain and running time for the proposed heuristics.
By increasing the parameter $\epsilon$, hence, simulating dense graphs, the gap between \emph{Dijkstra} and the \emph{F-tree} approaches is reduced. For these datasets, we can also observe the benefit of \emph{FT+M+CI+DS} which still identifies a high information gain whilst reducing the running time, as the number of candidates are reduced, respectively, we can prune candidates in earlier stages of each iteration.

{\bfseries Parameter $c$.} We also evaluated the penalization parameter $c$ of the delayed sampling heuristics and summarize the results. In all our evaluated settings, ranging from $1.01\geq c \geq 16$, the running time consistently decreases as $c$ is decreased, yielding a factor of $2$ to $10$ speed-up for $c=1.2$, depending on the dataset, and a multi-orders of magnitude speed-up for $c=1.01$. Yet, for $c<1.2$ we start to observe a significant loss of information flow. For the extreme case of $c=1.01$, the information flow became worse than \emph{Dijkstra}, as edges become suspended unreasonable long, choosing edges nearly arbitrarily. For the default setting of $c=2$ used in all previous evaluations, the \emph{delayed sampling heuristics} showed insignificant loss of information, but yielding a better running time.
\vspace{-0.25cm}
\subsection{Experiments on Real World Data}
Our first real world data experiments uses the \emph{San Joaquin County Road Network}. As road networks are of very sparse nature, and follow a strong locality assumption, our approaches outperforms \emph{Dijkstra} significantly as $k$ is scaled to $k=250$. Thus, \emph{Dijkstra} is highly undesirable as budget is wasted without proper return in information flow. In this setting, following the locality assumption, we see that all heuristics yield a significant run time performance gain, while the information flow remains similar for all heuristics.


In the next experiment, we employ the \emph{social circles of friends} dataset, an extremely dense network with no locality assumption, where most pairs of nodes are connected. As described in Section \ref{subsec:datasets}, each vertex in this graph only has ten high-probability links, whereas all other nodes have a lower probability.
Figure \ref{fig:experiment_facebook} shows that 
\emph{Dijkstra} yields a most significant loss of information, as it is forced to quickly build a large-height tree to maintain high probability edges. Further, we see that the memoization heuristic yields a significant running time improvement of about one order of magnitude. We note that in such dense setting, heuristics \emph{CI} and \emph{DS} show almost no effect in both, runtime and information flow. 

Figure \ref{fig:experiment_dblp} shows similar results on the \emph{DBLP} collaboration network dataset, a sparse network which follows no locality assumption. Again, we observe a loss of potential information flow for \emph{Dijkstra} as $k$ increases. 

Finally, we observe similar behavior of all approaches on a bigger graph such as the \emph{YouTube social network}, which refers to a sparse setting with no locality assumption. Figure \ref{fig:experiment_yt} shows the results. As in the other settings, we can observe an extremely low information flow of \emph{Dijkstra}, and an extreme running time of the \emph{Naive} approach. It is interesting that in this setting, the memoization approach \emph{FT+M} yields only a minimal gain in running time, like the other heuristics. Fortunately, none of these heuristics shows a significant loss of information flow.
\vspace{-0.20cm}
\subsection{Experimental Evaluation: Summary}\vspace{-0.10cm}
To summarize our experimental results, we reiterate the shortcomings of the naive solutions, and briefly discuss which of the heuristics are best used in which setting.

{\bfseries Naive:} Our \emph{Naive} competitor, which applies a Greedy edge selection (c.f. Sec. \ref{sec:algorithms}) but does not use the \emph{F-tree}, is multiple orders of magnitude slower than our other approaches in all real-data experiments (c.f. Fig. \ref{fig:experiment_realWorld}). Further, large sampling errors also yield a significantly lower information flow in most settings.

{\bfseries Dijkstra:} A \emph{Dijkstra}-based spanning tree algorithm runs extremely fast, but at the cost of an extreme loss of information, yielding low information flow. The information loss is particularly high for social networks (e.g. Fig. \ref{fig:experiment_facebook}), where cycles are required to increase the odds of connecting a distant node to the source.

{\bfseries FT:} Employing the \emph{F-tree} proposed in Section \ref{sec:ct} maximizes the information flow. Compared to the \emph{Naive} approach, smaller partitions need to be sampled yielding smaller sampling variation while being multiple orders of magnitude faster. 

{\bfseries FT+M:} The memoization heuristic technique described in Section \ref{subsec:memo} was shown to be simple and effective. It yields vast reduction in running time of up to one order of magnitude on real-data (see Fig. \ref{fig:experiment_realWorld}), while showing no notable detriment to the information flow.

{\bfseries FT+M+CI:} Employing confidence intervals as described in Section \ref{subsec:CI} has shown a significant improvement in running time on spatial networks following the locality assumption (c.f. Figures \ref{fig:experiment_graphSize_with}, \ref{fig:experiment_graphdensity_with}, and \ref{fig:experiment_sanJoaquin}). However, this heuristic yields no improvement (and often has a detrimental effect) in settings without locality assumptions such as in social networks (c.f. Fig. \ref{fig:experiment_facebook}-\ref{fig:experiment_yt}). This heuristic should not be employed in such settings.

{\bfseries FT+M+DS:} The delayed sampling heuristic presented in Section \ref{subsec:DS} yields an improvement in running time in networks following the locality assumption. This improvement is especially large in cases having a high vertex degree (c.f. Figure \ref{fig:experiment_graphdensity_with}). However, in social networks which do not follow the locality assumption, the gain of this heuristic is often marginal (c.f. Fig. \ref{fig:experiment_facebook}-\ref{fig:experiment_yt}). Yet, this heuristic comes at minimal loss of information flow, such that it is not detrimental to enable it by default.

{\bfseries FT+M+CI+DS:} The combination of all heuristics inherits the problems of FT+M+CI and FT+M+DS for the cases without locality assumption. But for the cases with locality assumption, our experiments on real world data show that in most cases the combination of all heuristic achieves significant lower running time compared to setting where we apply each of the heuristics separately proofing the importance of each proposed heuristic.   
\vspace{-0.45cm}
\section{Conclusions}\vspace{-0.2cm}
In this paper we discussed solutions for the problem of maximizing the information flow in an uncertain graph given a fixed budget of $k$ communication edges. We identified two \emph{NP}-hard subproblems that needed heuristical solutions: \textbf{(i)} Computing the expected information flow of a given subgraph; and \textbf{(ii)} selecting the optimal $k$-set of edges. For problem \emph{(i)}, we developed an advanced sampling strategy that only performs an expensive (and approximative) sampling step for parts of the graph for which we can not obtain an efficient (and exact) analytic solution. For problem \emph{(ii)}, we propose our \emph{F-tree} representation of a graph $G$, which keeps track of \emph{bi-connected components} - for which sampling is required to estimate the information flow - and \emph{mono-connected components} - for which the information flow can be computed analytically. On the basis of the \emph{F-tree} representation, we introduced further approaches and heuristics to handle the trade-off between effectiveness and efficiency. Our evaluation shows that these enhanced algorithms are able to find high quality solutions (i.e., $k$-sets of edges having a high information flow) in efficient time, especially in graphs following a locality assumption, such as road networks and wireless sensor networks.\vspace{-0.2cm}

\section{Acknowledgement}
This paper is supported by the Deutsche Forschungsgemeinschaft (DFG) under grant number RE 266/5-1.

\ifCLASSOPTIONcaptionsoff
  \newpage
\fi



%

\bibliographystyle{abbrv}
\bibliography{abbrev,bibliography}


%


\vspace{-1.25cm}
\begin{IEEEbiography}[{\includegraphics[width=1in,height=1in,clip,keepaspectratio]{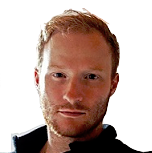}}]{Christian
Frey} \noindent is a research fellow at the Institute for Informatics at the Ludwig-Maximilians-Universit\"{a}t M\"{u}nchen,
Germany. His research interests include query processing in (uncertain) graph databases,
network analysis on large heterogeneous
information networks and machine learning approaches on (heterogeneous) information networks/Knowledge Graphs.
\end{IEEEbiography}
\vspace{-1.15cm}
\begin{IEEEbiography}[{\includegraphics[width=1.1in,height=1.1in,clip,keepaspectratio]{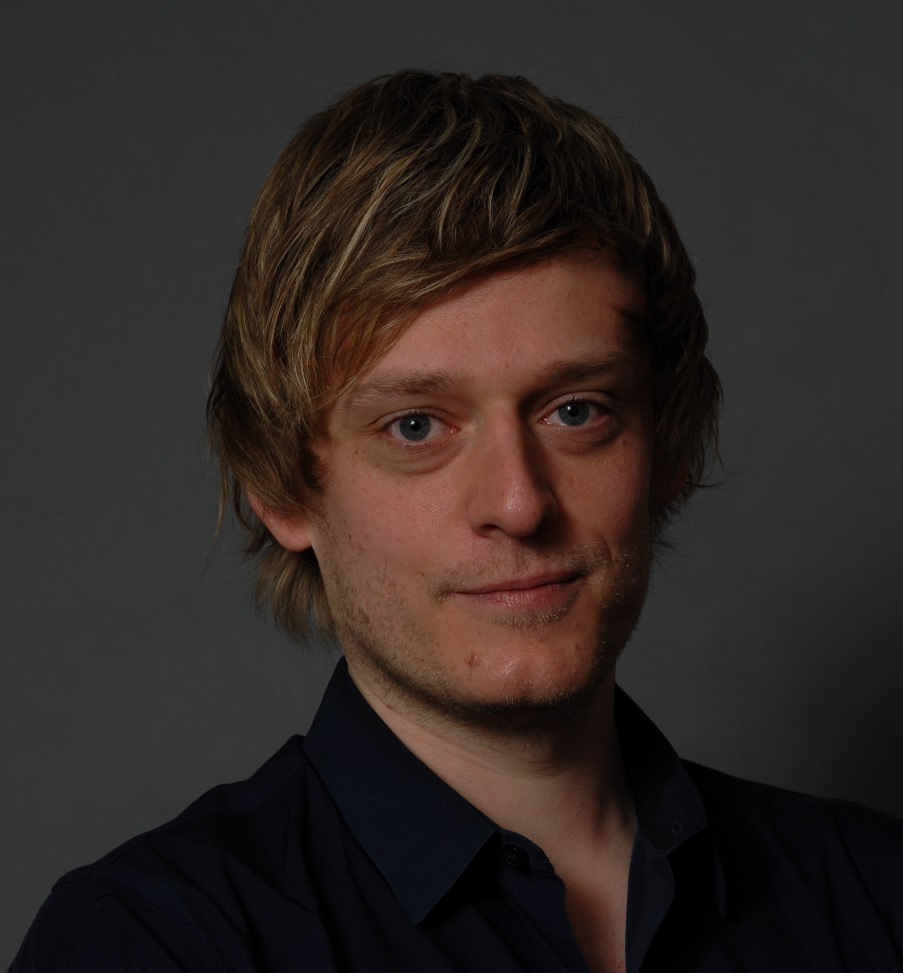}}]{Andreas
Z\"{u}fle} \noindent is an assistant professor with the department of
Geography and Geoinformation Science at George Mason University. Dr. Z\"{u}fle’s research expertise includes big spatial data, spatial data mining, social network mining, and uncertain database management.Since 2016, Dr. Z\"{u}fle research has received more than \$2,000,000 in research grants by the National Science Foundation (NSF) and the Defense Advanced Research Projects Agency (DARPA). Since 2011, Dr. Z\"{u}fle has published more than 60 papers in refereed conferences and journals having an h-index of 16. 
\end{IEEEbiography}
\vspace{-1.15cm}
\begin{IEEEbiography}[{\includegraphics[width=1.1in,height=1.1in,clip,keepaspectratio]{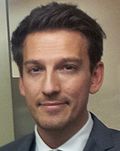}}]{Tobias
Emrich} \noindent received his PhD in Computer Science from LMU,
Munich in 2013. He then did his Post-Doc at the Integrated Media Systems Center
at the University of Southern California in 2014. In 2015 he went back to LMU to
become the Director of the Data Science Lab. Since then he started and led many
industry collaborations on Data Science topics with companies such as Siemens,
Volkswagen, Roche, and IAV. His research interest include similarity search and
data mining in spatial, temporal, uncertain and dynamic graph databases. To date
he has more than 40 publications in refereed conferences.
\end{IEEEbiography}
\vspace{-1.15cm}
\begin{IEEEbiography}[{\includegraphics[width=1.1in,height=1.1in,clip,keepaspectratio]{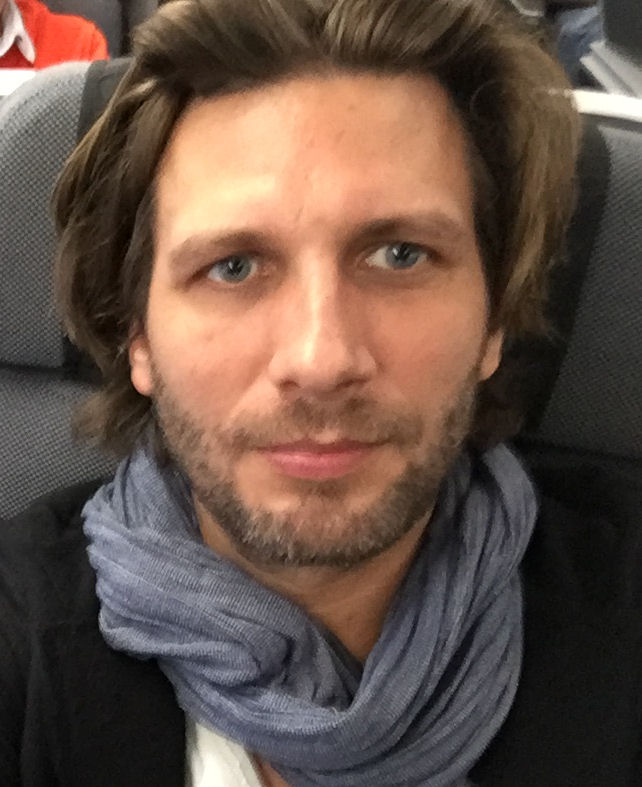}}]{Matthias
Renz} \noindent is an associate professor at the Computational and
Data Science Department at George Mason University. He received his PhD in
computer science at the Ludwig-Maximilians-Universit\"{a}t (LMU) Munich 2006,
and his habilitation 2011. His main research topics are data science, scientific and spatial databases, data mining and uncertain databases. To date, he has more than 120 peer-reviewed publications that in total received over 2200 citations.
\end{IEEEbiography}




\end{document}